\newcommand{\keywords}[1]{\par\addvspace\baselineskip
\noindent\keywordname\enspace\ignorespaces#1}
\newcommand{\Z}{\mathbb{Z}}
\newcommand{\N}{\mathbb{N}}
\newcommand{\PI}{\mathrm{\Pi}}
\newcommand{\SFT}{\mathrm{SFT}}
\newcommand{\sofic}{\mathrm{sofic}}
\title{Plane-Walking Automata}
\titlerunning{Plane-Walking Automata}
\author{
Ville Salo
\and
Ilkka T\"orm\"a
}
\institute{
		TUCS -- Turku Centre for Computer Science \\
		University of Turku, Finland \\
		\email{\{vosalo,iatorm\}@utu.fi}
}
\begin{document}
\maketitle

\begin{abstract}
In this article, we study classes of multidimensional subshifts defined by multihead finite automata, in particular the hierarchy of classes of subshifts defined as the number of heads grows. The hierarchy collapses on the third level, where all co-recursively enumerable subshifts are obtained in every dimension. We also compare these classes to SFTs and sofic shifts. We are unable to separate the second and third level of the hierarchy in one and two dimensions, and suggest a related open problem for two-counter machines.
\end{abstract}

\keywords{plane-walking automaton, multihead automaton, subshift}

\section{Introduction}

In this article, we discuss multihead finite automata on infinite multidimensional configurations, which we call plane-walking automata, and use them to define classes of subshifts. Our model is based on the general idea of a graph-walking automaton. In this model, the automaton is placed on one of the nodes of a graph with colored nodes, and it repeatedly reads the color of the current node, updates its internal state, and steps to an adjacent node. The automaton eventually enters an accepting or rejecting state, or runs forever without making a decision. Usually, we collect the graphs that it accepts, or the ones that it does not reject, and call this collection the language of the automaton. We restrict our attention to machines that are deterministic, although an interesting continuation of our research would be to consider nondeterministic or alternating machines.

Well-known such models include the two-way deterministic finite automata (2DFA) walking back-and-forth on a finite word, and tree-walking automata traversing a tree. See \cite{HoKuMa09} for a survey on multihead automata on words, and the references in \cite{Bo08a} for information on tree-walking automata. In multiple dimensions, our automata are based on the concept of picture-walking (or $4$-way) automata for accepting picture languages, defined in \cite{BlHe67} and surveyed in \cite{InTa91,KaSa11}.

The first question about subshifts accepted by plane-walking automata is how this class relates to existing classes of subshifts. In particular, we compare the class of subshifts accepted by a one-head deterministic automaton to SFTs and sofic shifts, two well-known classes in the theory of subshifts. They correspond, in some sense, to local languages and regular languages of finite words, since an SFT is defined by local rules, and a sofic shift is a letter-to-letter projection of an SFT. It is well-known that in the one-dimensional finite case, graph-walking automata with a single head (2DFA) define precisely the regular languages. However, for more complicated graphs, deterministic graph-walking automata often define a smaller class than the one containing letter-to-letter projections of local languages (which is often considered the natural generalization of regularity): deterministic tree-walking automata do not define all regular tree languages \cite{BoCo08} and deterministic picture-walking automata do not accept all recognizable picture languages \cite{GiVeRe97}. We show in Theorem~\ref{thm:Between} that this is also the case for a one-head deterministic plane-walking automaton in the multidimensional case: the class of subshifts defined is strictly between SFTs and sofic shifts.

Already in \cite{BlHe67}, the basic model of picture-walking automata was augmented by multiple heads,\footnote{Strictly speaking, they were augmented by markers, but the difference is small.} and we similarly consider classes of subshifts defined by multihead plane-walking automata. In \cite[Theorem 3]{BlHe67}, it was shown that the hierarchy obtained as the number of heads grows is infinite in the case of pictures (by a diagonalization argument). Similar results are known for one-dimensional words \cite{HsYe75} and trees \cite{BuPrSaWe06}. In the case of subshifts, we show that the hierarchy collapses to the third level, which is precisely the class of subshifts whose languages are co-recursively enumerable. In particular, it properly contains the class of sofic shifts. However, we are not able to separate the second and third levels in the case of one or two dimensions, although we find it very likely that they are distinct. We discuss why this problem appears hard to us, suggest a possible separating language, and state a related open problem for two-counter machines.

\section{Preliminary Notions}

In this article, a \emph{($d$-dimensional) pattern} is a function $P : D \to \Sigma$, where $D = D(P) \subset \Z^d$ is the \emph{domain} of $P$, and $\Sigma$ is a finite \emph{alphabet}. A full pattern with domain $\Z^d$ is called a \emph{configuration (over $\Sigma$)}, and other patterns have finite domains unless otherwise noted. The restriction of a pattern $P$ to a smaller domain $D$ is denoted by $P|_D$. We say that a pattern $P$ \emph{occurs at $\vec v \in \Z^d$} in another pattern $P'$, if we have $\vec u + \vec v \in D(P')$ and $P'_{\vec u + \vec v} = P_{\vec u}$ for all $\vec u \in D(P)$. For $s \in \Sigma$, we denote by $|P|_s$ the number of occurrences of $s$ in $P$.

A \emph{subshift} is a set $X \subset \Sigma^{\Z^d}$ of configurations defined by a set $F$ of \emph{forbidden patterns} -- a configuration $x \in \Sigma^{\Z^d}$ is in $X$ if and only if none of the patterns of $F$ occur in it. If $F$ is finite, then $X$ is a \emph{subshift of finite type}, or SFT for short, and if $F$ is recursively enumerable, then $X$ is $\emph{co-RE}$ or $\PI^0_1$. If the domain of every pattern in $F$ is of the form is $\{\vec 0, \vec e_i\}$, where $\vec e_1, \ldots, \vec e_d$ is the natural basis of $\Z^d$, then $X$ is a \emph{tiling system}. A \emph{sofic shift} is obtained by renaming the symbols of an SFT, or equivalently a tiling system. If it is decidable whether a given pattern occurs in some configuration of $X$, then $X$ is \emph{recursive}.

Unless otherwise noted, we always use the binary alphabet $\Sigma = \{0,1\}$.

\section{Choosing the Machines}

The basic idea in this article is to define subshifts by deterministic and multihead finite automata as follows: Given a configuration $x \in \Sigma^{\Z^d}$, we initialize the heads of the automaton on some of its cells, and let them run indefinitely, moving around and reading the contents of $x$. If the automaton halts in a rejecting state, then we consider $x$ to be rejected, and otherwise it is accepted.

After this high-level idea has been established, there are multiple a priori inequivalent ways of formalizing it, and we begin with a discussion of such choices. Much of this freedom is due to the fact that many different definitions and variants of multihead finite automata exist in the literature, both in the case of finite or infinite pictures and one-dimensional words (see \cite{HoKuMa09} and references therein).

\emph{Heads or markers?} A multihead automaton can be defined as having multiple heads capable of moving around the input, or as having one mobile head and several immobile markers that the head can move around. In the latter case, one must also decide whether the markers are indistinguishable or distinct, and whether they can store information or not. In this article, we choose the former approach of having multiple mobile heads.

\emph{Global control or independent heads?} Next, we must choose how the heads of our machines interact. The traditional approach is to have a single global state that controls each head, but in our model, this could be considered `physically infeasible', as the heads may travel arbitrarily far from each other. For this reason, and in order not to have too strong a model, the heads of our automata are independent, and can interact only when they lie in the same cell.

\emph{Synchronous or asynchronous motion?} Now that the heads have no common memory, we need to decide whether they still have a common perception of time, that is, whether they can synchronize their motion. In the synchronous updating scheme, the heads update their states and positions at the same time, so that the distance between two heads moving in the same direction stays constant. The other option is asynchronous updating, where the heads may update at different paces, possibly nondeterministically. We choose the synchronous scheme, as it is easier to formalize and enables us to shoot carefully synchronized signals, which we feel are the most interesting aspect of multihead plane-walking automata.

Next, we need to decide how exactly a plane-walking automaton defines a subshift. Recall that a subshift is defined by a possibly infinite set of finite forbidden patterns in a translation-invariant way. In our model, the forbidden patterns should be exactly those that support a rejecting run of the automaton.

\emph{How do we start?} First, we could always initialize our automata at the origin $\vec 0 \in \Z^d$, decide the acceptance of a configuration based on this single run, and restrict to automata that define translation-invariant sets. Second, we may quantify over all coordinates of $\Z^d$, initialize all the heads at the same coordinate, and reject if some choice leads to rejection. In the third option, we quantify over all $k$-tuples of coordinates, and place the $k$ heads in them independently. The first definition is not very satisfying, since most one-head automata would have to be discarded, and of the remaining two, we choose the former, as it is more restrictive. We also quantify over a set of initial states, so that our subshift classes are closed under finite intersection, and accordingly seem more natural.

\emph{How do we end?} Finally, we have a choice of what constitutes as a rejecting state. Can a single head cause the whole computation to reject, or does every head have to reject at the same time, and if that is the case, are they further required to be at the same position? We again choose the most restrictive option.

All of the above models are similar, in that by adding a few more heads or counters, one can usually simulate an alternative definition. Sometimes, one can even show that two models are equivalent. For example, \cite[Theorem~2.3]{BlHe67} states that being able to distinguish markers is not useful in the case of finite pictures; however, the argument seems impossible to apply to plane-walking automata.

To recap, our definition of choice is the \emph{deterministic $k$-head plane-walking finite automaton with local information sharing, synchronous updating, quantification over single initial coordinate and initial state, and rejection with all heads at a single coordinate}, with the (necessarily ambiguous) shorthand $k$PWDFA.

\section{Definitions}

We now formally define our machines, runs, acceptance conditions and the subshifts they define. For this section, let the dimension $d$ be fixed.

\begin{definition}
A $k$PWDFA is a $5$-tuple $A = (Q, \Sigma, \delta, I, R)$, where $Q = Q_1 \times \cdots \times Q_k$ is the finite set of \emph{global states}, the $Q_i$ are the \emph{local states}, $\Sigma$ is the \emph{alphabet}, and $\delta = (\delta_1, \ldots, \delta_k)$ is the list of \emph{transition functions}
\[ \delta_j : S_j \times \Sigma \to Q_j \times \Z^d, \]
where $S_j = Q'_1 \times \cdots \times Q'_{j-1} \times Q_j \times Q'_{j+1} \times \cdots \times Q'_k$, and $Q'_i = Q_i \cup \{?\}$. We call $I \subset Q$ the set of \emph{initial states}, and $R \subset Q$ the set of \emph{rejecting states}.
\end{definition}

Note that all functions above are total.

\begin{definition}
Let $A = (Q, \Sigma, \delta, I, R)$ be a $k$PWDFA. An \emph{instantaneous description} or \emph{ID} of $A$ is an element of $\mathrm{ID}_A = (\Z^d)^k \times Q$. Given a configuration $x \in \Sigma^{\Z^d}$, we define the \emph{update function} $A_x : \mathrm{ID}_A \to \mathrm{ID}_A$. Namely, given $c = (\vec v^1, \ldots, \vec v^k, q_1, \ldots, q_k) \in \mathrm{ID}_A$, we define $A_x(c)$ as follows. If $(q_1, \ldots, q_k) \in R$ and $\vec v^1 = \cdots = \vec v^k$, then we say $c$ is \emph{rejecting}, and $A_x(c) = c$. Otherwise, $A_x(c) = (\vec w^1, \ldots, \vec w^k, p_1, \ldots, p_k)$, where $\vec w^j = \vec v^j + \vec u^j$ and
\[ \delta_j(q'_1, \ldots, q'_{j-1}, q_j, q'_{j+1}, \ldots, q'_k, x_{\vec v^j}) = (p_j, \vec u^j), \]
where we write $q'_i = q_i$ if $\vec v^i = \vec v^j$, and $q'_i = {?}$ otherwise. The \emph{run of $A$ on $x \in \Sigma^{\Z^d}$ from $c \in \mathrm{ID}_A$} is the infinite sequence $A_x^\infty(c) = (A_x^n(c))_{n \in \N}$. We say the run is \emph{accepting} if no $A_x^n(c)$ is rejecting. We define the \emph{subshift of $A$} by
\begin{align*}
S(A) = \{ x \in \Sigma^{\Z^d} \;|\; \forall q = (q_1, \ldots, q_k) \in I, \vec v \in \Z^d: A_x^\infty(\vec v, \ldots, \vec v, q) \mbox{~is accepting.} \}
\end{align*}
\end{definition}

We now define our hierarchy of interest:

\begin{definition}
We refer to the class of all $d$-dimensional SFTs (sofc shifts) over the alphabet $\Sigma = \{0, 1\}$ as simply $\SFT^d$ ($\sofic^d$, respectively). For $k > 0$, define
\[ S_k^d = \{ S(A) \;|\; A \mbox{~is a $d$-dimensional $k$PWDFA}. \} \]
\end{definition}

It is easy to see that $S_k^d \subset S_{k+1}^d$ for all $k > 0$, and that every $S_k^d$ only contains $\PI^0_1$ subshifts. Since a deterministic finite state automaton can clearly check any local property, we also have $\SFT^d \subset S_1^d$.

\begin{remark}
We note some robustness properties. While the definition only allows information sharing when several heads lie in the same cell, we may assume that heads can communicate if they are at most $t$ cells away from each other. Namely, if we had a stronger $k$-head automaton where such behavior is allowed, then we could simulate its computation step by $\Theta(k t^d)$ steps of a $k$PWDFA where the heads visit, one by one, the $\Theta(t^d)$ cells at most $t$ steps away from them, and remember which other heads they saw in which states. Also, while we allow the machines to move by any finite vector, we may assume these vectors all have length $0$ or $1$ by simulating a step of length $r$ by $r$ steps of length $1$. Finally, the classes $S_k^d$ are closed under conjugacy, rotation, mirroring and intersection.
\end{remark}

To compare these classes, we need to define a few subshifts and classes of subshifts. In most of our examples, the configurations contain the symbol $0$ in all but a bounded number of coordinates.

\begin{definition}
The \emph{$d$-dimensional $n$-sunny side up subshift} is the $d$-dimensional subshift $X^d_n \subset \{0, 1\}^{\Z^d}$ with forbidden patterns $\{ P \;|\; |P|_1 > n \}$. A $d$-dimensional subshift is \emph{$n$-sparse} if it is a subshift of $X^d_n$, and \emph{sparse} if it is $n$-sparse for some $n \in \N$. If $X$ is a $d_1$-dimensional subshift and $d_2 > d_1$, we define $X^{\Z^{d_2 - d_1}}$ as the $d_2$-dimensional subshift where the contents of every $d_1$-dimensional hyperplane $\{ \sum_{i=1}^{d_1} n_i \vec e_i \;|\; \vec n \in \Z^{d_1} \} \subset \Z^{d_2}$ are independently taken from $X$.
\end{definition}

An $n$-sparse subshift is one where at most $n$ symbols $1$ may occur, and the sunny side up subshifts are the ones with no additional constraints. The name sunny side up subshift is from \cite{PaSc10}. We called the $n$-sunny side up subshift \emph{the} $n$-sparse subshift in \cite{SaTo13b}, but feel that the terminology used here is a bit better.

We also use the following variation of the well-known mirror subshift.

\begin{definition}
\label{def:Mirror}
The \emph{$d$-dimensional mirror subshift} $X_\mathrm{mirror}^d \subset \{0, 1\}^{\Z^d}$ is defined by the following forbidden patterns.
\begin{itemize} 
\item All patterns $P$ of domain $\{ 0 \} \times \{ 0, 1, 2 \}^{d-1}$ such that the all-$1$ pattern of domain $\{ \vec 0, \vec e_i \}$ for some $i \in \{2, \ldots d\}$ occurs in $P$, but $|P|_0 \neq 0$.
\item All patterns $P$ of domain $\{ 0, k \} \times \{ 0, 1 \}^{d-1}$ for some $k > 1$ with $|P|_0 = 0$.
\item All patterns $P$ of domain $ \{ -k, k \} \times \{ 0 \}^{d-1} \cup \{ 0 \} \times \{ 0, 1 \}^{d-1}$ for some $k > 1$ where $P|_{\{ 0 \} \times \{ 0, 1 \}^{d-1}}$ contains no symbols $0$ and $P_{(-k, 0, \ldots, 0)} \neq P_{(k, 0, \ldots, 0)}$.
\end{itemize}
\end{definition}

Intuitively, the rules are that if two symbols $1$ are adjacent on some $(d-1)$-dimensional hyperplane perpendicular to $\vec e_1$, then that hyperplane must be filled with $1$'s, and there is at most one such hyperplane, whose two sides are mirror images of each other. In two dimensions, the hyperplane is just a vertical line.

Finally, we define a type of counter machine, which we will simulate by $2$- and $3$-head automata in the proofs of Proposition~\ref{prop:SparseCoRE} and Theorem~\ref{thm:ThreeIsAll}. This is essentially the model MP$1$RM (More Powerful One-Register Machine) defined in \cite{Sc72}. We could also use any other Turing complete machine with a single counter which supports multiplication and division, such as John Conway's FRACTRAN \cite{Co87}.

\begin{definition}
An \emph{arithmetical program} is a sequence of commands of the form
\begin{itemize}
\item Multiply/divide/increment/decrement $C$ by $m$,
\item If $(C \bmod m) = j$, goto $k$,
\item If $C = m$, goto $k$,
\item Halt,
\end{itemize}
where $j, m \in \N$ are arbitrary constants and $k \in \N$ refers to one of the commands.
\end{definition}

To run such a program on an input $n \in \N$, we initialize a single counter $C$ to $n$, and start executing the commands in order. The arithmetical commands work in the obvious way. We may assume the program never divides by a number unless it has checked that the value in $C$ is divisible by it, and never subtracts $m$ unless the value in $C$ is at least $m$. Thus, $C$ always contains a natural number. In the goto-statements, execution continues at command number $k$. The halt command ends the execution, and signifies that the program accepts $n$. It is well-known that this model is Turing complete; more precisely, we have the following.

\begin{lemma}[\cite{Sc72}]
\label{lem:ExpRE}
If a set $L \subset \N$ is recursively enumerable, then $\{2^n \;|\; n \in L\}$ is accepted by some arithmetical program.
\end{lemma}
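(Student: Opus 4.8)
The plan is to reduce the claim to the classical Turing-completeness of Minsky register machines, using the single counter $C$ as a Gödel (prime-power) encoding of several Minsky counters. Since $L$ is recursively enumerable, there is a multi-counter Minsky machine $M$ with counters $c_1, \ldots, c_r$ that \emph{semi-decides} $L$: started with $c_1 = n$ and $c_2 = \cdots = c_r = 0$, the machine $M$ halts if and only if $n \in L$. I would fix distinct primes $p_1, \ldots, p_r$ with $p_1 = 2$ and maintain throughout the invariant $C = p_1^{c_1} p_2^{c_2} \cdots p_r^{c_r}$, so that the states of all $r$ counters are packed into the single value $C$.

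Under this encoding, each elementary instruction of $M$ becomes a short block of arithmetical commands. Incrementing $c_i$ is ``multiply $C$ by $p_i$''; decrementing $c_i$ is ``divide $C$ by $p_i$'', which is legal because $M$ only decrements a positive counter, so $p_i \mid C$ holds and the divisibility can be confirmed first with an ``If $(C \bmod p_i) = 0$'' test; and the zero-test ``$c_i = 0$?'' is exactly ``$(C \bmod p_i) \neq 0$''. Conditional and unconditional jumps of $M$ translate directly into the goto-statements, and the halting instruction of $M$ becomes Halt. The point of the input convention is that $2^n$ is \emph{already} the code of the intended initial configuration, since $2^n = 2^{n} 3^{0} 5^{0} \cdots$ encodes $c_1 = n$ with all other counters zero. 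Hence, run on input $2^n$, the arithmetical program faithfully simulates $M$ from its starting configuration and halts precisely when $n \in L$.

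The remaining issue, and the part requiring the most care, is to guarantee that the \emph{accepted} set is exactly $\{2^n \mid n \in L\}$ rather than something larger: an arbitrary input $m$ decodes to the configuration whose counters are the prime exponents of $m$, so a program that simply simulates $M$ would also accept inputs like $2^n \cdot q$ with $q$ coprime to the $p_i$, and a fixed program cannot strip off an arbitrary (possibly large) prime factor of $m$. To handle this I would prepend a normalization phase that diverges unless the input is a genuine power of two, while recording the exponent $n$ so that the simulation can still begin from $c_1 = n$; implementing this cleanly takes some bookkeeping, precisely because of the impossibility of removing an unknown prime factor, but it can be arranged. In any case this gatekeeping is unnecessary for the later constructions, where the program is only ever run on inputs that are powers of two, so the essential content of the lemma is the Gödel-coded Minsky simulation together with the classical fact that recursively enumerable sets are semi-decided by register machines.
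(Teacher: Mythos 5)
The paper does not prove this lemma at all --- it is imported wholesale from \cite{Sc72} --- so there is no internal argument to compare against; what follows is an assessment of your sketch on its own terms. Your core construction, G\"odel-coding the counters of a Minsky machine into the single register as $C = p_1^{c_1}\cdots p_r^{c_r}$ with $p_1 = 2$ so that the input $2^n$ is already the code of the initial configuration, is the standard and correct route, and you rightly isolate the one delicate point: making the accepted set \emph{exactly} $\{2^n \mid n \in L\}$. But your resolution of that point is the step that fails. An upfront ``normalization phase that diverges unless the input is a power of two, while recording the exponent $n$'' is not something that ``can be arranged'' by bookkeeping: to verify that the odd part of $m$ is $1$ you must strip away the factors of $2$, and the exponent being stripped is unbounded, so it cannot be held in the finite control; nor can it be parked in the exponent of some auxiliary prime $p$, because the unknown residual factor $q$ of the input may itself be divisible by $p$ and silently corrupt that register. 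This is precisely the obstruction you name yourself, and naming it does not remove it.

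The working fix runs in the opposite order: the exactness check is deferred to \emph{after} the simulation halts rather than placed before it. First test, with non-destructive mod tests, that none of $p_2,\dots,p_r$ divides the input, diverging otherwise; then simulate $M$ from $(c_1,0,\dots,0)$ (the inert coprime factor $q$ rides along harmlessly, since every mod-$p_i$ test is unaffected by it); and when $M$ halts, zero out all simulated counters by dividing out every remaining $p_i$, which leaves $C = q$ exactly, and finally halt only if $C = 1$, diverging otherwise. On input $2^n q$ with $q > 1$ the program then either diverges together with $M$ or reaches the final test with $C = q \neq 1$ and diverges there, so the accepted set is exactly $\{2^n \mid n \in L\}$. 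You should also retract the closing remark that the gatekeeping ``is unnecessary for the later constructions'': in Proposition~\ref{prop:SparseCoRE} the program is run on an arbitrary distance $\ell$ between two symbols $1$, which need not be a power of two, and that proof explicitly requires the program to run forever on every $\ell \notin \{2^n \mid n \in L\}$.
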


\section{Results}

Our first results place the class $S_1^d$ between $\SFT^d$ and $\sofic^d$.

\begin{lemma}
\label{lem:OneGeZero}
In all dimensions $d$, we have $(X^1_1)^{\Z^{d-1}} \in S_1^d \setminus \SFT^d$.
\end{lemma}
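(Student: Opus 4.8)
The plan is to prove the two inclusions encoded in the statement separately. Recall that $(X^1_1)^{\Z^{d-1}}$ consists of exactly those $x \in \{0,1\}^{\Z^d}$ in which every line $\vec a + \Z \vec e_1$ carries at most one symbol $1$. For the membership $(X^1_1)^{\Z^{d-1}} \in S_1^d$, I would construct a $1$PWDFA $A$ with local states $\{s, r, \iota, \bot\}$ (think start, search, idle, reject), initial set $I = \{s\}$, rejecting set $R = \{\bot\}$, and the following transitions: from $s$ reading $1$, go to $r$ and step by $+\vec e_1$; from $s$ reading $0$, enter $\iota$ without moving; in $\iota$, stay put forever on either symbol; in $r$, step by $+\vec e_1$ on reading $0$ and switch to $\bot$ without moving on reading $1$. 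The remaining values of $\delta$ are irrelevant and may be chosen arbitrarily to keep $\delta$ total. Intuitively, a run launched on a $1$ scans rightward along its line for a second $1$ and rejects if it finds one, whereas a run launched on a $0$ does nothing.

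To verify $S(A) = (X^1_1)^{\Z^{d-1}}$, I would check both directions against the acceptance condition, which quantifies over all initial coordinates $\vec v$. If $x$ has two ones on a common line, then starting at the leftmost of them drives the head through a stretch of $0$'s in state $r$ until it reads the second $1$ and enters $\bot$; since $k = 1$ the position condition in the rejection rule is vacuous, so this run is rejecting and $x \notin S(A)$. Conversely, if every line of $x$ has at most one $1$, then every run from a $0$-cell idles forever and every run from a $1$-cell scans only $0$'s to its right, so no run ever enters $\bot$ and $x \in S(A)$. The unidirectional scan suffices precisely because a run is also launched from the left witness, so this is the only place where quantification over starting coordinates does real work. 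I expect this to be the only point requiring genuine care: one must confirm that a scan is never started from a $0$-cell, lest a lone legal $1$ lying to the right be mistaken for a violation; everything else in the construction is routine.

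For the non-membership $(X^1_1)^{\Z^{d-1}} \notin \SFT^d$, I would argue by contradiction from a finite forbidden set $F$ of diameter at most $N$. Let $x$ be $0$ everywhere except for ones at $\vec 0$ and $(N+1)\vec e_1$; then $x \notin (X^1_1)^{\Z^{d-1}}$, so some $P \in F$ occurs in $x$. But the domain of $P$ has diameter at most $N$, so its occurrence window meets at most one of the two ones; deleting whichever of the two ones lies outside that window (at least one does) leaves a configuration with a single $1$, which is globally legal and still contains $P$, contradicting $P \in F$. I do not anticipate a real obstacle here; the only thing to confirm is that every window of diameter at most $N$ in $x$ embeds into a globally legal configuration, which is immediate since such a window contains at most one $1$ in total.
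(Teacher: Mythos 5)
Your proposal is correct and follows essentially the same route as the paper: a one-head automaton that scans rightward along the $\vec e_1$-line from each starting cell and rejects upon finding a second $1$ (the paper's version counts $1$'s up to two rather than idling on an initial $0$, but this is cosmetic), together with a locality argument for non-SFT-ness. For the latter the paper exhibits two configurations---one legal, one not---sharing the same window contents, while you perturb a single illegal configuration outside the occurrence of a forbidden pattern; these are interchangeable phrasings of the same idea, and both are sound.
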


\begin{proof}
Note that $X = (X^1_1)^{\Z^{d-1}}$ is the $d$-dimensional subshift where no row may contain two symbols $1$. First, we show $X$ is not an SFT: Suppose on the contrary that it is defined by a finite set of forbidden patterns with domain $[0, n-1]^d$ for some $n \in \N$. Consider the configurations $x^0, x^1 \in \Sigma^{\Z^d}$ where $x^i_{(0, 0)} = x^i_{(n, i)} = 1$ and $x^i_{\vec v} = 0$ for $\vec v \in \Z^d - \{(0, 0), (n, i)\}$. Since any pattern with domain $[0,n-1]^d$ occurs in $x^0$ if and only if it occurs in $x^1$, we have $x^0 \in X$ if and only if $x^1 \in X$, a contradiction since clearly $x^0 \notin X$ and $x^1 \in X$.

To show that $X \in 1\mathrm{PWDFA}$, we construct a one-head automaton for $X$. The idea is that the head will walk in the direction of the first coordinate, and increment a counter when it sees a symbol $1$. If the counter reaches $2$, the automaton rejects. More precisely, the automaton is $A_1 = (\{q_0,q_1,q_2\}, \{0, 1\}, \delta, \{q_0\}, \{q_2\})$, where $\delta(q_0, a) = (q_a, \vec e_1)$, $\delta(q_1, a) = (q_{1+a}, \vec e_1)$ and $\delta(q_2, a) = (q_2, \vec 0)$ for $a \in \{0, 1\}$. If there are two $1$'s on any of the rows of a configuration $x \in \Sigma^{\Z^d}$, say $x_{\vec v} = x_{\vec w} = 1$ where $\vec w = \vec v + n \vec e_1$ for some $n \geq 1$, then the run of $A_1$ on $x$ from $(q_0, \vec v)$ is not accepting, as the rejecting ID $(q_2,\vec w + \vec e_1)$ is entered after $n+1$ steps. Thus, $x \notin S(A_1)$. On the other hand, it is easy to see the if no row of $x \in \Sigma^{\Z^2}$ contains two symbols $1$, then $x \in S(A_1)$. \qed
\end{proof}

\begin{theorem}
\label{thm:Between}
In all dimensions $d$, we have $S_1^d \subset \sofic^d$, with equality if $d = 1$.
\end{theorem}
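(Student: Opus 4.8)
The plan is to establish the containment $S_1^d \subseteq \sofic^d$ in every dimension, the reverse inclusion when $d=1$, and strictness of the containment for $d \ge 2$; I expect this last part to be the main obstacle. For the containment, let $A=(Q,\Sigma,\delta,I,R)$ be a one-head automaton, and (using the Remark) assume every move vector has length at most $1$. I would exhibit $S(A)$ as a letter-to-letter projection of an SFT on the alphabet $\Sigma \times 2^Q$: a cell carries its symbol $x_{\vec v}$ together with a \emph{survivor set} $T_{\vec v} \subseteq Q$, meant to mark the states from which the head, placed at $\vec v$, never reaches a rejecting state. The local rules are (i) if $q \in T_{\vec v}$ then $q \notin R$ and, writing $\delta(q,x_{\vec v})=(p,\vec u)$, also $p \in T_{\vec v+\vec u}$ (a constraint on the domain $\{\vec 0,\vec u\}$), and (ii) $I \subseteq T_{\vec v}$ at every cell; both are finitary, so the cover $Y$ is an SFT. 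If $x \in S(A)$, labelling each cell by its genuine set of non-rejecting states satisfies (i) by determinism of a single step and (ii) because $x \in S(A)$. Conversely, if $x$ admits any legal labelling, then unfolding the run from any $(\vec v,q)$ with $q \in I$ and inducting via rule (i) shows that the current state always lies in the current cell's survivor set, hence is never rejecting; so $x \in S(A)$. Thus the projection of $Y$ to $\Sigma$ equals $S(A)$, giving $S_1^d \subseteq \sofic^d$.

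For equality when $d=1$, let $X \subseteq \Sigma^{\Z}$ be sofic, so its language of factors $L(X)$ is regular and factorial; let $B$ be a left-to-right DFA with a single absorbing sink entered exactly on words outside $L(X)$. I would let $A$ run $B$ while moving in direction $\vec e_1$, with $I$ the start state of $B$ and $R$ its sink. From a start position $\vec v$ the head reads the word to its right and rejects iff some prefix leaves $L(X)$. Since $x \in X$ iff every factor of $x$ lies in $L(X)$, and $S(A)$ quantifies over all start positions, we obtain $S(A)=X$ and hence $\sofic^1 \subseteq S_1^1$.

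For strictness when $d \ge 2$, the natural witness is the mirror subshift $X_{\mathrm{mirror}}^d$ of Definition~\ref{def:Mirror}. It is sofic: in the $\vec e_1\vec e_2$-plane an SFT cover routes $45^\circ$ diagonal signals that each carry one bit, ascend from the axis on one side, reflect off the all-$1$ hyperplane, and descend to the mirrored cell, where equality is checked; since through any off-axis cell passes at most one ascending and one descending signal, the cover alphabet is finite. The hard part will be to show $X_{\mathrm{mirror}}^d \notin S_1^d$, i.e. that a single deterministic head cannot verify an unbounded reflection. I would argue by a Shepherdson-style crossing-sequence analysis across a hyperplane separating the two mirrored half-axes, applied to the family of valid configurations whose right half-axis ranges over all words of a fixed length $n$. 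Each crossing of the separating hyperplane transports at most $\log_2 |Q|$ bits, so the crossing behavior of each half is one of finitely many objects; for large $n$ two distinct right-words must induce matching behaviors, and splicing the left half of one valid configuration onto the right half of the other yields a non-mirror configuration whose runs all stay non-rejecting, contradicting $S(A)=X_{\mathrm{mirror}}^d$. The delicate point, and the main obstacle, is to make this splicing sound for runs that need not halt and may cross the separating hyperplane infinitely often: this requires encoding each half's behavior as a finite transducer on crossing states and showing that whether a run rejects depends only on these finite data.
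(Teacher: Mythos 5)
Your first two parts are correct and essentially identical to the paper's proof. The survivor-set cover over $\Sigma \times 2^Q$ is exactly the paper's SFT $Y$ over $2^Q \times \Sigma$: the paper forbids symbols $(Q',c)$ with $I \not\subset Q'$ or $R \cap Q' \neq \emptyset$ and pairs violating forward closure under $\delta$, which are your rules (ii) and (i). Likewise, the $d=1$ converse via a DFA with an absorbing sink for the complement of the factor language is the same appeal to the regularity of the forbidden-word language of a one-dimensional sofic shift. Note also that in this paper ``$\subset$'' denotes (possibly non-strict) containment: the theorem itself only claims $S_1^d \subseteq \sofic^d$ with equality for $d=1$, and strictness for $d \geq 2$ is deferred to Corollary~\ref{cor:Between}. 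So your third part is not needed for this statement.

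It is, however, genuinely wrong, in a way worth flagging: $X_{\mathrm{mirror}}^d$ is \emph{not} sofic --- Proposition~\ref{prop:Mirror} proves precisely $X_{\mathrm{mirror}}^d \in S_2^d \setminus \sofic^d$. You appear to have read Definition~\ref{def:Mirror} as mirroring only the axis line $\Z \times \{0\}^{d-1}$ (you speak of ``the two mirrored half-axes''), in which case the one-bit-per-diagonal cover would indeed work. But the third family of forbidden patterns applies at \emph{every} translate at which the all-$1$ block of shape $\{0\}\times\{0,1\}^{d-1}$ occurs, i.e.\ at every cell of the full hyperplane of $1$'s, so the entire half-spaces on its two sides must be mirror images of each other. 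Your cover would then have to route through a cell of the mirror hyperplane at height $m$ one signal for each pair $(k,j)$ with $k+j=m$, i.e.\ unboundedly many bits, and no finite cover alphabet suffices; the standard counting argument (comparing the $2^{n^2}$ possible half-space patterns against the $C^{O(n)}$ possible cover contents of a separating column) rules the shift out of $\sofic^d$ altogether, so it cannot witness $\sofic^d \setminus S_1^d$. The paper's separation instead goes through sparse subshifts: every sparse co-RE subshift is sofic (Theorem~\ref{thm:SparseIsSofic}), while $S_1^d$ contains no nontrivial sparse subshift for $d \geq 2$ (Proposition~\ref{prop:OneHeadNoSparse}), because a head that sees only $0$'s is forced into an eventually periodic trajectory and so cannot locate two sufficiently separated $1$'s. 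Thus already $X_1^d$ separates the two classes.
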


\begin{proof}
We first show $S_1^d \subset \sofic^d$. The proof of this is quite standard, see for example \cite{KaMo01}. Suppose $X \in S_1^d$, and let $A = (Q, \Sigma, \delta, I, R)$ be a $1$PWDFA accepting $X$. We construct an SFT $Y$ over the alphabet $2^Q \times \Sigma$, such that the second component of $Y$ contains exactly $X$. The forbidden patterns of $Y$ are
\begin{itemize}
\item every symbol $(Q', c) \in 2^Q \times \Sigma$ such that $I \not\subset Q'$ or $R \cap Q' \neq \emptyset$, and
\item every pair $\{\vec 0 \mapsto (Q_1, c_1), \vec v \mapsto (Q_2, c_2)\}$ such that $\delta(q_1, c_1) = (q_2, \vec v)$ for some $q_1 \in Q_1$ and $q_2 \notin Q_2$.
\end{itemize}
Now, if we initialize $A$ on the first component of some $y \in Y$, it is easy to see by induction that if it lies at $\vec v$ in state $q \in Q$ after some $n$ steps, then the first component of $y_{\vec v}$ contains $q$. Conversely, if $A$ accepts a configuration $x \in \Sigma^{\Z^d}$, then we collect the states of its infinite runs for every coordinate, and form a configuration $z \in (2^Q)^{\Z^d}$ with $(x,z) \in Y$.

It is well-known that a one-dimensional subshift is sofic if and only if it can be defined by a regular language of forbidden words \cite{LiMa95}. Since $2$-way deterministic finite automata only recognize regular languages, we have $\sofic^1 \subset S_1^1$, and the classes coincide. \qed
\end{proof}

\begin{remark}
For all dimensions $d_1 < d_2$, all $k$, and all subshifts $X \in S_k^{d_1}$, we have $X^{\Z^{d_2 - d_1}} \in S_k^{d_2}$, since a $d_2$-dimensional $k$PWDFA can simply simulate a $d_1$-dimensional one on any $d_1$-dimensional hyperplane. In particular, if $X \subset \Sigma^\Z$ is sofic, then $X^{\Z^{d-1}} \in S_1^d$ for any dimension $d$.
\end{remark}

Of course, since multidimensional SFTs may contain very complicated configurations, the same is true for the classes $S_1^d$. In particular, for all $d \geq 2$ there are subshifts in $S_1^d$ whose languages are co-RE-complete. However, just like in the case of SFTs, the sparse parts of subshifts in $S_1^d$ are simpler.

\begin{theorem}
\label{thm:1SparseRecursive}
Let the dimension $d$ be arbitrary, and let $X \in S_1^d$. For all $k$, the intersection $X \cap X_k^d$ is recursive.
\end{theorem}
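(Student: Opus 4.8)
The plan is to reduce the assertion that $Y := X \cap X_k^d$ is recursive to the problem of deciding, for a given finite pattern $P$, whether $P$ occurs in some configuration of $Y$. Since every configuration of $X_k^d$ carries at most $k$ symbols $1$ and is $0$ elsewhere, such a configuration is completely determined by the finite set $S \subset \Z^d$ of positions holding a $1$, with $|S| \le k$; write $x_S$ for the corresponding configuration. By translation invariance I may assume $P$ is placed at the origin, so that $P$ occurs in some $y \in Y$ if and only if there is a finite set $S$ with $|S| \le k$, consistent with $P$ on its domain $D(P)$ (in particular $|P|_1 \le k$, or else the answer is immediately ``no''), and with $x_S \in X = S(A)$ for the given $1$PWDFA $A = (Q, \Sigma, \delta, I, R)$. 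This isolates two sub-tasks: deciding $x_S \in X$ for a fixed $S$, and reducing the search over $S$ to finitely many cases.

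The heart of the argument is the membership step. The central observation is that on a sparse input the single head behaves almost everywhere like a finite automaton reading the constant symbol $0$. By the Remark I may assume all steps have length $0$ or $1$, so the head reads a $1$ only when it sits on one of the $\le k$ cells of $S$; as long as it avoids $S$ it follows the all-$0$ dynamics, whose state sequence is eventually periodic with some net displacement $\vec D$ per period. If $\vec D \neq \vec 0$, the head drifts monotonically to infinity, and once it has passed a computable threshold beyond $S$ in the direction of $\vec D$ it never returns to $S$ and simply repeats its periodic all-$0$ behavior forever; such a run rejects exactly when its bounded state cycle already contains a rejecting state, which is checkable. If $\vec D = \vec 0$, the head stays in a bounded region, so it eventually repeats a (position, state) pair, the run is eventually periodic, and rejection is decidable by bounded simulation. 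Thus each run either rejects in a computable number of steps or provably never rejects. The infinitely many starting positions collapse to finitely many cases: a run started far from $S$ either rejects within its initial all-$0$ segment (which depends only on the initial state, not the position) or reaches $S$, and two starting points differing by the drift vector yield translate-equivalent runs meeting $S$ first in the same cell and local state. The relevant data is then the finite set of \emph{interface} pairs (cell of $S$, local state), on which $A$ induces a computable macro-transition whose orbits are finite; deciding $x_S \in X$ reduces to checking that none of these finitely many representative runs rejects.

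It remains to bound the search over $S$. Here I would use translation invariance together with the drift-periodicity above to argue that whether $x_S \in X$ depends only on a bounded amount of combinatorial data: the way the $\le k$ ones cluster, their relative positions within each cluster up to a scale determined by $|Q|$, and the residues modulo the finitely many drift vectors governing whether a head leaving one cluster can drift into another. A $1$ placed far from the others in generic position contributes only a translation-invariant, position-independent effect, and sliding it along a drift line by multiples of the drift vector does not change membership. Hence there are only finitely many combinatorial types of admissible $S$, each realizable by a representative whose ones lie within a computable distance of $D(P)$; running the membership procedure on all of these settles the occurrence problem.

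The main obstacle is the rigorous bookkeeping in the two collapsing arguments rather than any single hard idea. I expect the delicate points to be (i) proving the escape criterion, i.e. that a head in a nonzero-drift all-$0$ phase that has passed the threshold beyond $S$ genuinely never revisits $S$, which requires controlling the bounded fluctuation of the periodic trajectory about the line $\vec v + n\vec D$; and (ii) making the reduction to finitely many $S$ fully precise, since extra ones can both create and destroy rejecting runs, so one cannot merely minimize the number of ones but must enumerate representatives of all combinatorial types. Both are routine once the drift decomposition is in place, but they are where the real care is needed.
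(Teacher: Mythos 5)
Your first half---deciding, for a \emph{fixed} finite set $S$ of at most $k$ ones, whether $x_S \in S(A)$---is essentially the paper's argument: reduce to steps of length $0$ or $1$, observe that away from $S$ the head follows the eventually periodic all-$0$ dynamics with a bounded-period drift, collapse the infinitely many starting positions to finitely many via translation along the drift vector, and finish by bounded simulation with loop detection. That part is sound (you are in fact slightly more careful than the paper about a drifting escape whose state cycle contains a rejecting state).

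The genuine gap is in your second half, the search over $S$. You assert that membership of $x_S$ depends only on ``finitely many combinatorial types'' of $S$, each with a representative whose ones lie within a computable distance of $D(P)$, justified by a sliding-along-drift-lines equivalence. This is not proved, and it is the hard content of that route: with up to $k$ ones forming several clusters, a head can bounce between clusters via different drift cycles (the drift vector depends on the state in which the head leaves a cluster), so whether moving one $1$ changes membership depends on which of several drift trajectories, emanating from which interface states of which other clusters, it intersects. Making ``finitely many types'' and ``bounded representatives'' precise here is not routine bookkeeping, and extra ones can both create and destroy rejecting runs, as you note yourself. The paper sidesteps this entirely with a different and much lighter idea: test the all-$0$ extension $y$ of $P$ using the fixed-configuration decision procedure; if $y \notin Y$, then the rejecting run of $A$ on $y$ visits only a finite, computable set $C$ of cells, and \emph{any} $x \in Y$ extending $P$ must place a $1$ somewhere in $C \setminus D(P)$, since otherwise the very same run rejects $x$. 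One then recurses on the finitely many patterns $P \cup \{\vec v \mapsto 1\}$ for $\vec v \in C \setminus D(P)$, and the recursion depth is bounded by $k - |P|_1$. This localization of where additional ones can possibly go is the missing idea in your proposal; without it (or a completed version of your classification), the occurrence problem is not decided.
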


\begin{proof}
Let $X = S(A)$ for a $1$PWDFA $A = (Q, \Sigma, \delta, I, R)$ that only takes steps of length $0$ and $1$. First, we claim that it is decidable whether a given configuration $y$ with at most $k$ symbols $1$ is in $Y$. We need to check whether there exists $\vec v \in \Z^d$ such that started from $\vec v$ in one of the initial states, $A$ eventually rejects $y$.

To decide this, note first that if $A$ does not see any symbols $1$, then it does not reject -- otherwise, the all-$0$ configuration would not be in $Y$. Define $W = \{ \vec v \in \Z^d \;|\; \|\vec v\| \leq |Q| \}$, and denote $\Z W = \{ n \vec w \;|\; n \in \Z, w \in W \}$.
Let $E \subset \Z^d$ be the convex hull of $D = \{ \vec v \in \Z^d \;|\; y_{\vec v} = 1 \}$, and let $F = E + W + W$. Note that no matter which initial state $A$ is started from, the only starting positions from which it can reach one of the symbols $1$ are those in
\[ W + \Z W + W + D \subset \Z W + F. \]
Namely, whenever $A$ takes $|Q|$ steps without encountering a symbol $1$, it must repeat a state. Thus, if $A$ is at least $2|Q|$ cells away from the nearest symbol $1$, then it must be ultimately periodically moving in some direction $\vec v \in \Z^d$ with $\|\vec v\| \leq |Q|$, repeating its state every $s \leq |Q|$ steps. If we denote by $(q_n, \vec v_n)_{n \leq N}$ the (finite or infinite) sequence of states and coordinates that $A$ visits before encountering a symbol $1$, then there are $a < b \leq |Q|$ with $q_a = q_b$. This implies that $\vec v_{a + k(b-a) + \ell} = \vec v_a + k (\vec v_b - \vec v_a) + \vec w_\ell$ for all $k \in \N$ and $\ell \leq b-a$ for which the coordinate is defined, where $\|\vec v_a - \vec v_0\|, \|\vec w_\ell\| \leq |Q|$. The claim follows, since $A$ must enter the domain $D$ in order to encounter a $1$.

Next, we show that we only need to analyze the starting positions in $G = W + W + W + F$. Namely, if $A$ enters the set $F$ for the first time after $a + k(b-a) + \ell$ steps and $k > 2 |Q| / \|\vec v_b - \vec v_a\|$, then the distance of the coordinate $\vec v_n$ from $F$ is at least $|Q|$ for all $n \leq a$. This means that if we initialize $A$ at the coordinate $\vec v_0 + \vec v_b - \vec v_a$ in the same state $q_0 \in Q$, then it will also enter $F$ for the first time in the state $q_{a + k(b-a) + \ell}$ and at the coordinate $\vec v_{a + k(b-a) + \ell}$.

From each starting position in the finite set $G$ and each initial state, we now simulate the machine until it first enters $F$ or exits $W + G$ (in which case it never enters $F$). Now, we note that if the machine re-exits $F$ after the first time it is entered, then it does not reject $y$. Namely, $F = E + W + W$ is convex and contains a $0$-filled border thick enough that $A$ must be in an infinite loop, heading off to infinity. Thus, if $A$ ever rejects $y$, it must do so by entering $F$ from $G$ without exiting $W + G$, then staying inside $F$, and rejecting before entering a loop, which we can easily detect. This finishes the proof of decidability of $y \in Y$.

Now, given a pattern $P$ with domain $D \subset \Z^d$, we need to decide whether it occurs in a configuration of $Y$. If $|P|_1 > k$, the answer is of course `no' since $Y$ is $k$-sparse, so suppose $|P|_1 \leq k$. Construct the configuration $y$ with $y|_D = P$ and $y_{\vec v} = 0$ for $\vec v \in \Z^d \setminus D$. If $y \in Y$, which is decidable by the above argument, then we answer `yes'. If $y \notin Y$ and $|P|_1 = k$, then we can safely answer `no'.

If $y \notin Y$ and $|P|_1 < k$, then we have found a rejecting run of $A$ that only visits some finite set of coordinates $C \subset \Z^d$. If there exists $x \in Y$ such that $x|_D = P$, then necessarily $x_{\vec v} = 1$ for some $\vec v \in C \setminus D$. For all such $\vec v$, we construct a new pattern by adding $\{ \vec v \mapsto 1 \}$ into $P$, and call this algorithm recursively on it. If one of the recursive calls returns `yes', then we answer `yes' as well. Otherwise, we answer `no'. The correctness of this algorithm now follows by induction. \qed
\end{proof}

For the previous result to be nontrivial, it is important to explicitly take the intersection with a sparse subshift instead of assuming that $X$ is sparse, for the following reason.

\begin{proposition}
\label{prop:OneHeadNoSparse}
For all dimensions $d \geq 2$, the class $S_1^d$ contains no nontrivial sparse subshifts.
\end{proposition}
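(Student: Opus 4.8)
The plan is to argue by contradiction: assume $X \in S_1^d$ is $n$-sparse and contains a configuration with at least one symbol $1$, and manufacture a configuration in $X$ carrying more than $n$ symbols $1$. Write $X = S(A)$ for a $1$PWDFA $A = (Q, \Sigma, \delta, I, R)$ taking steps of length at most $1$. First I would record two facts about runs that read only $0$'s. Since $X$ is nonempty, shifting any of its configurations so that its boundedly many $1$'s escape to infinity, and using that subshifts are closed and shift invariant, gives $0^{\Z^d} \in X$; hence $A$ never rejects the all-$0$ configuration, from any coordinate and any initial state. Moreover, whenever the head takes more than $|Q|$ consecutive steps over cells containing $0$, it repeats a local state and thereafter moves periodically, so a long $0$-reading segment traces a straight line with net displacement in $\Z \vec u$ for some $\vec u$ drawn from the finite set $U$ of period vectors of the $0$-loops of $A$.

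Next I fix a configuration $x^* \in X$ whose set $D$ of $1$-positions is finite, nonempty and, after a shift, contains $\vec 0$; the \emph{clusters} below are the translates $D + \vec t$. The heart of the argument is a \emph{cloning lemma}: if $\vec t_0, \ldots, \vec t_N$ are coordinates whose pairwise differences $\vec t_i - \vec t_j$ ($i \neq j$) are all large and lie outside the set $\Lambda = \bigcup_{\vec u \in U} (\Z \vec u + B(\vec 0, c))$ for a suitable constant $c = O(|Q| + \mathrm{diam}(D))$, then the configuration $y$ with $1$'s exactly on $\bigcup_i (D + \vec t_i)$ lies in $X$. To prove it, consider any rejecting run of $A$ on $y$. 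If the run ever reads cells of two distinct clusters, pick a time it leaves the neighborhood of one and later reaches another with only $0$-reads in between; the intervening segment is long and straight, forcing $\vec t_i - \vec t_j \in \Lambda$, contrary to our choice. Hence the run reads cells of at most one cluster $D + \vec t_i$, and so agrees, on every coordinate it visits, with the translate $\sigma^{\vec t_i} x^*$ (if one cluster) or with $0^{\Z^d}$ (if none); since both of these configurations lie in $X$, the identical run is non-rejecting, a contradiction. Thus $y \in X$.

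It remains to choose the $\vec t_i$, and this is exactly where $d \geq 2$ enters. Each thickened line $\Z \vec u + B(\vec 0, c)$ meets a box of side $R$ in only $O_c(R)$ points, hence has density $0$ in $\Z^d$ when $d \geq 2$; the finite union $\Lambda$ therefore also has density $0$, and one can greedily select arbitrarily many coordinates $\vec t_0, \ldots, \vec t_N$ with all pairwise differences large and outside $\Lambda$. Taking $N$ with $(N+1)\,|D| > n$, the cloning lemma produces a configuration of $X$ with more than $n$ symbols $1$, contradicting $n$-sparsity. In dimension $1$ the set $\Lambda$ is all of $\Z$, which is why the statement genuinely needs $d \geq 2$; indeed $X^1_1$ is sofic and so lies in $S^1_1$ by Theorem~\ref{thm:Between}.

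The main obstacle I anticipate is the bookkeeping in the cloning lemma: one must check carefully that an \emph{arbitrary} rejecting run — started from any coordinate, not merely near a cluster — decomposes into near-cluster phases separated by long straight $0$-reading phases, and that the displacement across such a phase is pinned to $\Z \vec u$ up to the bounded error absorbed into $c$. Handling the degenerate case $\vec u = \vec 0$ (the head trapped forever in a bounded $0$-loop, which can therefore never travel to a second cluster) and making the ``reads at most one cluster'' reduction fully precise are the places that demand care; the remaining estimates are routine.
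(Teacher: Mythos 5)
Your proof is correct and follows essentially the same route as the paper's: both rest on the observation that a head reading only $0$'s is confined to a finite union of thickened lines through its starting point (the set $\Z W + W + W$ in the paper's notation), which has density zero when $d \geq 2$, so that many copies of a $1$-pattern can be placed with pairwise differences avoiding this set and no run can ever connect two of them. The only real difference is at the reduction step: the paper assumes $X_1^d \subset S(A)$ ``by recoding'' and then plants single isolated $1$'s, whereas you clone the entire finite cluster $D$ of an actual configuration $x^* \in X$ and compare each run against a translate of $x^*$ or against $0^{\Z^d}$ --- a slightly more careful treatment of the same idea.
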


\begin{proof}
Let $A$ be a $1$PWDFA such that $S(A)$ is sparse and contains at least two configurations. We may assume that $X_1^d \subset S(A)$ by recoding if necessary. Recall the notation of the proof of Theorem~\ref{thm:1SparseRecursive}. It was shown there that if $A$ can reach a position $\vec v \in \Z^d$ from the origin without encountering a $1$, then $\vec v \in W + W + \Z W$. Let $V \subset \Z^d$ be an infinite set such that $\vec v - \vec w \notin \Z W + W + W$ for all $\vec v \neq \vec w \in V$. One exists since $d \geq 2$. Define $x \in \Sigma^{\Z^d}$ by $x_{\vec v} = 1$ if and only if $\vec v \in V$. Then $A$ accepts $x$, since it encounters at most one symbol $1$ on every run on $x$, contradicting the sparsity of $S(A)$. \qed
\end{proof}

Next, we show that two heads are already quite powerful in the one- and two-dimensional settings, and such results do not hold for them. In two dimensions, some type of searching is also possible with just two heads.

\begin{proposition}
\label{prop:Sunny}
The $k$-sunny side up shift $X_k^2$ is in $S_2^2$ for all $k$.
\end{proposition}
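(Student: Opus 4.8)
The plan is to build a $2$-head automaton that verifies the $k$-sunny-side-up condition $\lvert x \rvert_1 \le k$ on a two-dimensional configuration $x$. The key difficulty over the one-dimensional row-by-row counting of Lemma~\ref{lem:OneGeZero} is that the symbols $1$ may be spread arbitrarily across the whole plane, so a single head sweeping in one direction cannot be sure it has seen all of them. With two heads, however, we can implement a genuine plane search: one head acts as an \emph{anchor} that pins down a reference coordinate, while the other head sweeps out larger and larger regions (say, nested square frames or an expanding spiral centred on the anchor) counting the symbols $1$ it encounters. Since the heads move synchronously, the sweeping head can periodically return to the anchor, or signal back to it, so that the anchor can tally the count and reject as soon as more than $k$ symbols $1$ have been confirmed.

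The steps I would carry out, in order, are as follows. First I fix the anchor head at the starting coordinate $\vec v$ and designate the second head as the explorer. Second, I have the explorer traverse the $n$-th frame $\{\vec w : \lVert \vec w - \vec v\rVert_\infty = n\}$ for $n = 1, 2, 3, \ldots$, in a fixed deterministic order, incrementing a local counter (stored in its finitely many states, capped at $k+1$) each time it reads a $1$. Third, whenever the explorer completes a frame it walks back to the anchor along a canonical path, shares its running count with the anchor via local information sharing (both heads occupying cell $\vec v$), and the two heads jointly decide: if the confirmed count of symbols $1$ ever exceeds $k$, they both move to $\vec v$ and enter a rejecting state; otherwise the explorer resumes sweeping the next frame. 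Because a configuration with $\lvert x \rvert_1 \le k$ contains only finitely many $1$'s, the count stabilises at some value $\le k$ and the automaton loops forever without rejecting, whereas a configuration with $\lvert x \rvert_1 > k$ eventually has $k+1$ of its symbols $1$ enclosed in some finite frame, so the explorer is guaranteed to find and report them.

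The correctness then splits into the two familiar directions. For soundness, I must argue that every rejecting run witnesses a genuine violation: the automaton only rejects after the anchor has received confirmation of $k+1$ distinct symbols $1$, so $x \notin X_k^2$. For completeness, I must argue that every $x$ with at most $k$ symbols $1$ is accepted from every starting coordinate and every initial state, which follows because all the $1$'s lie within some bounded frame, after which the explorer's count never increases and no rejecting ID is ever reached. I also need to check the boundary behaviour of the definition of $S(A)$, namely that quantifying over all initial coordinates $\vec v$ causes no problem: shifting the anchor merely re-centres the search, and the total count of symbols $1$ is translation invariant, so the accept/reject decision is independent of $\vec v$.

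The main obstacle I expect is the bookkeeping of the synchronous back-and-forth between the two heads while keeping everything in finitely many states. The explorer's distance from the anchor grows without bound, so I cannot store that distance in the state; instead I must ensure that the two heads stay \emph{synchronised} purely by the rigid common clock, so that the explorer can always find its way back to the anchor (for instance by retracing a self-similar spiral whose turning points are locally recognisable, or by having the anchor emit a timed signal the explorer can meet). Getting a clean deterministic sweep-and-return schedule that uses no unbounded memory and correctly re-establishes contact at the anchor after each frame is the delicate part; by contrast, the counting itself is trivial once the count is capped at $k+1$, and the limit on communication distance is harmless by the robustness remark. The argument is also helped by the fact that we only ever need to \emph{confirm} violations, never to certify their absence, so the explorer may search forever without ever having to prove it has seen the whole plane.
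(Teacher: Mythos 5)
Your high-level plan (expanding search, count capped at $k+1$, reject only upon confirming $k+1$ ones, never needing to certify absence) matches the spirit of the paper's proof, but the step you yourself flag as ``the delicate part'' is not a bookkeeping detail --- it is the entire content of the proposition, and the architecture you propose cannot carry it out. A head that sits still at $\vec v$ forever contributes nothing to the explorer except the ability to test ``am I currently at $\vec v$?'', so your explorer is effectively a single finite-state head on an unmarked $\{0,1\}$-configuration. Such a head cannot traverse the frame $\{\vec w : \lVert \vec w - \vec v\rVert_\infty = n\}$ for unbounded $n$: the corners of the frame are not locally recognisable, it cannot count to $n$ in its finitely many states, and it cannot retrace its path back to the anchor. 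Indeed, the argument of Theorem~\ref{thm:1SparseRecursive} / Proposition~\ref{prop:OneHeadNoSparse} shows a lone head is confined to a set of the form $\Z W + W + W$ until it hits a $1$, so it can never perform a genuine plane search. Your two suggested remedies do not repair this: ``locally recognisable turning points'' do not exist on the input alphabet, and a head cannot ``emit a timed signal'' --- in this model a signal \emph{is} a moving head, so the anchor would have to abandon its post, which changes the design fundamentally.

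The paper's proof resolves exactly this point by making \emph{both} heads move in a choreographed way. It first reduces to counting $1$'s in a quarterplane anchored at a cell containing a $1$ (one automaton $A_{a,b,c,d}$ per decomposition $a+b+c+d=k+1$), and then sweeps expanding triangles $\{(m,m') : m+m' \le n\}$: one head walks the L-shaped boundary at speed $1$ while the other drifts along the diagonal at speed $1/2$, so that their meeting points mark the successive triangle corners without either head ever storing a distance. The way back to the origin is found not by retracing a path but by using a \emph{bounded} counter of how many $1$'s lie on the current column segment (bounded because only $k+1$ ones ever matter, and the start cell holds a $1$ serving as the terminal landmark). If you want to complete your write-up, you need to replace the stationary anchor with some such two-moving-head synchronisation scheme and supply a concrete, finitely-describable mechanism for relocating the base point after each sweep; as it stands, the proof has a genuine gap at its central step.
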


\begin{proof}
For $a, b, c, d \in \N$ with $a+b+c+d = k+1$, we construct a two-head automaton $A_{a, b, c, d}$ with the following property: when started on top of a symbol $1$ at the coordinate $\vec 0$, the automaton rejects a configuration if and only if
\begin{itemize}
\item the quarterplane $\N \times \N$ contains at least $a$ symbols $1$,
\item the quarterplane $(-\infty, -1] \times \N$ contains at least $b$ symbols $1$,
\item the quarterplane $(-\infty, -1] \times (-\infty, -1]$ contains at least $c$ symbols $1$, and
\item the quarterplane $\N \times (-\infty, -1]$ contains at least $d$ symbols $1$.
\end{itemize}
Clearly, the intersection of the subshifts accepted by the finitely many automata $A_{a, b, c, d}$ is precisely $S_k$.

Since the four cases are essentially symmetric, it is enough to construct an automaton $A_a$ that checks that there are at least $a$ symbols $1$ on the top right quarterplane, and then returns to its starting position. First, the automaton checks that it is indeed on top of a symbol $1$, and enters an infinite loop if not.

The two heads of $A_a$ are called the \emph{L-head} and the \emph{diagonal head}. Both heads remember a number $j \in [0, a]$, the number of the diagonal head being called the \emph{count}, and the other the \emph{height}. In the initial state, the count is $1$ and the height is $0$. We inductively preserve the following invariant: If the two heads are at $(0, n)$ and the count is $j < a$, then there are exactly $j$ symbols $1$ in the coordinates $D = \{(m, m') \;|\; m, m' \in \N, m+m' \leq n\}$, and if $j = a$, then $D$ contains at least $a$ symbols $1$; the height is precisely the number of $1$s on the column between $(0, n)$ and $(0, 0)$. We explain how, if the automaton is in coordinate $(0, n)$ with count $j$ and height $h$ so that the invariant holds, it can move to the coordinate $(0,n+1)$, preserving the invariant.

The automaton sends its L-head down at speed $1$, and the diagonal head southeast at speed $1/2$ (that is, the diagonal head moves every second step). When the L-head finds the coordinate $(0,0)$ (which it can determine based on the height), it turns right, again using the height counter to remember the number of $1$'s it has seen on the row. The two heads meet at $(n, 0)$. Now, the heads move one step to the right, possibly updating the width counter. The heads then repeat the procedure in reverse, with the difference that the diagonal head increments the count value for every $1$ it encounters on its way northwest, up to the value of $a$. The heads meet at $(0, n+1)$, and the invariant is preserved.

Finally, if the count is $a$ and the heads are at a position $(0,n)$, they can return to the origin together with the aid of the height counter. \qed
\end{proof}

The following proposition gives the separation of the classes $S_1^d$ and $S_2^d$ for $d \leq 2$. It can be thought of as an analogue of the well-known result that two counters are enough for arbitrarily complicated (though not arbitrary) computation.

\begin{proposition}
\label{prop:SparseCoRE}
For $d \leq 2$, there is a $2$-sparse co-RE-complete subshift $X \in S_2^d$.
\end{proposition}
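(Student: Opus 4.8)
The plan is to reduce the whole construction to the simulation of a single-register counter machine by two heads walking along an otherwise blank row. Fix an RE-complete set $L \subseteq \N$ and use Lemma~\ref{lem:ExpRE} to obtain an arithmetical program $\Pi$ whose halting set is exactly $\{2^n : n \in L\}$. I would then build a $2$PWDFA that, on a configuration carrying exactly two symbols $1$ at horizontal distance $N$, simulates $\Pi$ on a fixed computable multiple of $N$ and enters a rejecting state precisely when $\Pi$ halts, while every other run loops forever. The forbidden distances of the resulting subshift $X$ form a computable preimage of $\{2^n : n \in L\}$, which is again RE-complete, so membership in $X$ is co-RE-complete. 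To force $X \subseteq X_2^d$ I intersect with an automaton defining $X_2^d$, which lies in $S_1^1$ for $d=1$ (it is sofic) and in $S_2^2$ for $d=2$ by Proposition~\ref{prop:Sunny}; since the classes $S_k^d$ are closed under intersection, this keeps $X \in S_2^d$. In two dimensions the simulation is engaged only when the two $1$s share a row, so a single row-based construction handles both $d \le 2$ at once.

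For the simulation itself I keep the two heads \emph{co-located} between consecutive program steps, sitting on a common cell at distance $C$ to the right of the right marker $O$, so that this common position encodes the counter value $C$. All motion is confined to the blank half-line to the right of $O$, so the left marker is never seen again during arithmetic and cannot be confused with $O$. With this representation, incrementing or decrementing $C$ is just moving the co-located pair one cell away from or toward $O$, a zero test is reading the symbol $1$ at $O$, and a comparison $C = m$ for a constant $m$ is a bounded walk toward $O$; all of these are immediate.

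The delicate operations are multiplication, division and residue tests, and this is where the restriction that heads may share information \emph{only} when co-located becomes the main obstacle: a head cannot be halted by a marker detection performed at the other, arbitrarily distant head. I would therefore realize each such operation as a synchronized ``billiard'', exploiting the synchronous updating of our model. Starting from the co-located pair at distance $C$, I launch the two heads with suitable constant integer velocities, let at most one of them reflect off the wall $O$ upon reading its $1$ (a purely local decision), and choose the velocities so that the heads re-collide exactly at the intended new counter position, the collision being detected by co-location. For example, to multiply by $m \ge 3$ one head leaves $O$ while the other travels outward at matched speeds (for instance $m-2$ and $m(m-2)$), so that elementary kinematics forces a meeting at distance $mC$; a symmetric gadget divides by $m$, with the collision landing off the lattice exactly when $m \nmid C$, which simultaneously delivers the residue $C \bmod m$. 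The remaining constant factors, in particular $m = 2$, are obtained by composing finitely many such gadgets, and the computation is initialized by a single reflection off the left marker, depositing a co-located pair at distance $\Theta(N)$ right of $O$; this fixes the computable multiple of $N$ fed to $\Pi$.

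Finally, I would make the quantification over initial coordinates harmless: the sole initial state does useful work only upon immediately reading a $1$, any head started on a $0$ is sent into a non-rejecting loop, and a run aborts into a loop whenever a head meets an unexpected $1$ to the right of its assumed wall, so that exactly the runs begun at the true right marker can reject, and only on halting inputs. Checking that such runs never spuriously reject an allowed configuration and that every forbidden configuration is rejected from some start is routine bookkeeping. I expect the genuine difficulty to lie entirely in the previous paragraph: designing the billiard gadgets so that multiplication, division and residue testing are all carried out with bounded state while every inter-head synchronization is mediated solely by co-location.
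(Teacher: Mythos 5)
Your construction is essentially the paper's: intersect with $X_2^d$ (via Proposition~\ref{prop:Sunny} and closure of the classes under intersection), invoke Lemma~\ref{lem:ExpRE} to get an arithmetical program accepting $\{2^n : n \in L\}$, encode the counter value as the distance of the head pair from one of the two $1$-markers, and implement multiplication and division by launching the heads at distinct constant speeds with a reflection off a marker so that they re-collide exactly at the new counter position. The paper does exactly this, measuring the counter from the leftmost $1$ and using a stationary ``counter head'' plus a sweeping ``zig-zag head''; your choice to park the pair to the right of the rightmost $1$ is a cosmetic variation.

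One sub-step of your version does not work as stated: you claim the division gadget ``simultaneously delivers the residue $C \bmod m$'' because the collision ``lands off the lattice exactly when $m \nmid C$''. In this model a \emph{missed} co-location is undetectable --- both heads sit on blank cells, each sees only $?$ for the other, and neither can tell locally that they have passed each other --- so a billiard whose outcome is read off from a non-collision cannot compute the residue. The fix is what the paper does and what you already use for the test $C = m$: leave one head as a stationary beacon at distance $C$, send the other to the marker counting steps modulo $m$, let it walk back until co-location, and only then, with $m \mid C$ verified, run the exact-division billiard (the paper explicitly assumes the program never divides without first checking divisibility). A second, smaller slip: ``a computable preimage of an RE-complete set is again RE-complete'' is false in general (the preimage reduces \emph{to} the set, not from it); indeed if your initialization deposits the pair at distance $cN$ with $c$ not a power of $2$, the forbidden set $\{N : cN \in 2^L\}$ could even be empty. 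Either have the program divide out $c$ first, or apply Lemma~\ref{lem:ExpRE} to a shifted copy of $L$ matched to your reflection constant. Both repairs are routine, and with them your argument coincides with the paper's proof.
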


\begin{proof}
We only prove the case $d = 2$, as the one-dimensional case is even easier. Let $X$ be the subshift of $X_2^2$ where either the two symbols $1$ are on different rows, or their distance is not $2^n$ for any $n \in L$, for a fixed RE-complete set $L \subset \N$.

To prove $X \in S_2^2$, we construct a $2$PWDFA $A$ for it. The heads of $A$ are called the `zig-zag head' and the `counter head'. Since $S_2^2$ is closed under intersection, Proposition~\ref{prop:Sunny} shows that we may restrict our attention to configurations of $X_2^2$. First, our machine checks that it is started on a symbol $1$ and another symbol $1$ occurs on the same row to the left, by doing a left-and-right sweep with the zig-zag head. Otherwise, $A$ runs forever without halting. The rightmost $1$ is ignored during the rest of the computation, and from now on, we refer to the leftmost $1$ as the \emph{pointer}. Since the heads never leave the row on which they started, they can keep track of whether they are to the right or to the left of the rightmost $1$.

We think of the distance of the counter head from the pointer as the value of a counter $C$ of an arithmetic program accepting the language $L' = \{2^n \;|\; n \in L\}$ (which exists by Lemma~\ref{lem:ExpRE}). We simulate this program using the two heads as follows: The finite state of the zig-zag head will store the state of the program. If the counter of the arithmetical program contains the value $C$ and the pointer is at $\vec v \in \Z^2$, then both heads are at $\vec v + (C,0)$ (except for intermediate steps when a command of the program is being executed). See Figure~\ref{fig:Computing}. To increment or decrement $C$ by $m$, the zig-zag head and the counter head simply move $m$ steps to the left or right, staying together. To check $C = m$, the zig-zag head moves $m$ steps to the left and looks for the pointer, and to check $(C \bmod m) = j$, the zig-zag head makes a left-and-right sweep, visiting the pointer and returning to the counter head, using its finitely many states to compute the remainder.

Multiplications and divisions are done by standard signal constructions. For example, to move the zig-zag head and the counter head from $\vec v + (C, 0)$ to $\vec v + (C/2, 0)$ (assuming it has been checked that $C$ is even), the counter head starts moving left at speed $1$, and the zig-zag head at speed $3$, bouncing back from the pointer, and the two meet at exactly $\vec v + (C/2, 0)$. It is easy to construct such pairs of speeds for multiplication or division by any fixed natural number.

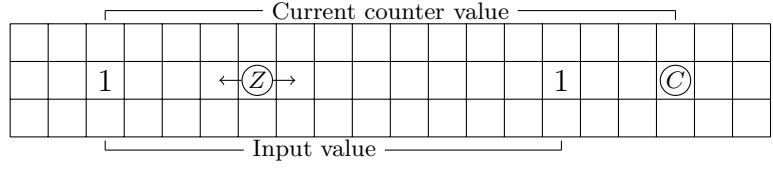
\begin{figure}
\begin{center}
\begin{tikzpicture}[scale=.5]

\node[draw,circle,inner sep=0,minimum size=.4cm] (zigzag) at (6.5,1.5) {$Z$};
\node[draw,circle,inner sep=0,minimum size=.4cm] (counter) at (17.5,1.5) {$C$};
\draw[->] (zigzag) edge (5.5,1.5);
\draw[->] (zigzag) edge (7.5,1.5);

\node (pointer) at (2.5,1.5) {\large $1$};
\node () at (14.5,1.5) {\large $1$};

\node (text) at (10,3.35) {Current counter value};
\draw (text) edge (2.5,3.35) edge (17.5,3.35);
\draw (2.5,3.35) -- ++(0,-.25);
\draw (17.5,3.35) -- ++(0,-.25);

\node (text2) at (8,-.35) {Input value};
\draw (text2) edge (2.5,-.35) edge (14.5,-.35);
\draw (2.5,-.35) -- ++(0,.25);
\draw (14.5,-.35) -- ++(0,.25);

\draw (0,0) grid (20,3);

\end{tikzpicture}
\end{center}
\caption{Simulating an arithmetical program with two heads, labeled $Z$ for zig-zag and $C$ for counter. The leftmost $1$ is the pointer, and empty squares contain $0$-symbols.}
\label{fig:Computing}
\end{figure}

If the arithmetical program eventually halts, then $A$ rejects the configuration, and otherwise it simulates the program forever. Now, let $x \in X_2^2$ be arbitrary. If $A$ is not started on the rightmost $1$ of a row of $x$ that contains two $1$'s, then it does not reject $x$. Suppose then that this holds and let $\ell \in \N$ be the distance between the two $1$'s, so that $A$ starts simulating the arithmetical program as described above, with input value $\ell$. If $\ell \in L'$, then the program eventually halts and the automaton rejects, and we have $x \notin X$. Otherwise, the program and thus the automaton run forever, and $x \in X$ since $A$ does not reject $x$ from any starting position. This shows that $S(A) = X$. \qed
\end{proof}

We do not believe that \emph{all} $2$-sparse co-RE-complete subshifts are in $S_2^d$ for $d \leq 2$, but we cannot prove this. In three or more dimensions, however, we obtain the following analogue of Proposition~\ref{prop:OneHeadNoSparse}, which is proved similarly.

\begin{theorem}
\label{thm:ThreeIsMore}
For all dimensions $d \geq 3$, the class $S_2^d$ contains no nontrivial sparse subshifts.
\end{theorem}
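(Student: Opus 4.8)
The plan is to mimic the proof of Proposition~\ref{prop:OneHeadNoSparse}, replacing the one-dimensional (line-shaped) reachable set used there by a two-dimensional one. So let $A = (Q, \Sigma, \delta, I, R)$ be a $2$PWDFA that only takes steps of length $0$ and $1$, and suppose $S(A)$ is sparse and contains at least two configurations; as in Proposition~\ref{prop:OneHeadNoSparse} we may assume $X_1^d \subset S(A)$, so that $A$ never rejects a configuration with at most one symbol $1$. I would construct a configuration $x$ with infinitely many $1$'s that $A$ nevertheless accepts, contradicting $S(A) \subset X_n^d$. The crucial reduction is this: it suffices to find an infinite set $V \subset \Z^d$ such that, letting $x$ be its indicator configuration, every run of $A$ on $x$ (from every initial coordinate and state) visits at most one cell carrying a $1$. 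Indeed, such a run reads exactly the same symbols as the corresponding run on the configuration $x'$ obtained from $x$ by deleting all $1$'s except the single visited one; since $x' \in X_1^d \subset S(A)$, that run is accepting, hence so is the run on $x$, and therefore $x \in S(A)$.

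Recall the notation $W = \{ \vec v \in \Z^d : \|\vec v\| \le |Q| \}$ and $\Z W = \{ n \vec w : n \in \Z, \vec w \in W \}$ from the proof of Theorem~\ref{thm:1SparseRecursive}, where it is shown that a single head can reach a symbol $1$ at $\vec a$ from a starting cell $\vec p$ only if $\vec p - \vec a \in \Z W + W + W$, a finite union of thickened lines. The heart of the present proof is the two-head analogue, which I would isolate as a lemma: on any configuration $x'$ whose only symbol $1$ is at $\vec a$, the set of all cells visited by either head in the run from a common starting cell $\vec p$ is contained in $(\vec p + \Z W + W + W) \cup (\vec a + \Z W + W + W)$, after enlarging $W$ by a bounded factor. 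To prove this I would analyze the deterministic joint dynamics of $(\vec v^1 - \vec v^2, q_1, q_2) \in \Z^d \times Q$: while the two heads are mutually far, and far from $\vec a$, they move freely, so the relative vector and each individual position are eventually periodic and drift linearly along $1$-dimensional rays; any excursion that later returns must be bounded, because a finite-state head needs a landmark to reverse direction and only the single $1$ at $\vec a$ (or the other head) is available. Consequently all visited cells lie on finitely many thickened rays anchored either at $\vec p$ or at $\vec a$.

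Granting the lemma, the argument closes as follows. Consider a run on the indicator configuration $x$ of $V$, and suppose it visits two distinct $1$'s; let $\vec a_1$ be the first one visited and $\vec a_2$ the next one distinct from it. Until $\vec a_2$ is reached the run reads only $\vec a_1$ as a $1$, so up to that moment it coincides with the run on the single-$1$ configuration with its $1$ at $\vec a_1$; the lemma places every visited cell, in particular $\vec a_2$, in $(\vec p + \Z W + W + W) \cup (\vec a_1 + \Z W + W + W)$. Since $\vec a_1$ is itself reached from $\vec p$ on an all-$0$ background we also have $\vec a_1 \in \vec p + \Z W + W + W$, and combining these inclusions gives $\vec a_2 - \vec a_1 \in T$, where $T = \Z W + \Z W + W + W + W + W$ lies in a bounded neighborhood of a finite union of two-dimensional sublattices of $\Z^d$. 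It therefore suffices to choose $V$ infinite with $\vec v - \vec w \notin T$ for all distinct $\vec v, \vec w \in V$: then no run visits a second $1$, and the reduction above yields the contradiction $x \in S(A)$. Such a $V$ exists precisely because $d \ge 3$: a finite union of thickened two-dimensional sublattices has density $0$ in $\Z^d$, so the forbidden differences never exhaust the space and the points of $V$ can be chosen greedily, one at a time, avoiding $\bigcup_i(\vec v_i + T)$.

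The main obstacle is the reachability lemma, specifically verifying that the reachable set grows by exactly one dimension — from the $\Z W + W + W$ of a single head to the $\Z W + \Z W + \dots$ of two heads. This is where the second head helps and where the hypothesis $d \ge 3$ (rather than $d \ge 2$) enters: while one head pins the landmark $\vec a$, the other may drift off along a fresh $1$-dimensional family of rays anchored at $\vec a$, which is itself displaced from $\vec p$ by a $1$-dimensional amount. Making the bounded-excursions claim precise — ruling out that repeated meetings of the heads and repeated visits to $\vec a$ allow the meeting points, and hence the reachable cells, to spread into a genuinely higher-dimensional region — is the delicate part, and I would carry it out by the same eventual-periodicity analysis used for one head in the proof of Theorem~\ref{thm:1SparseRecursive}.
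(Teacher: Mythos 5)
Your overall strategy is the paper's: reduce to showing that on the indicator configuration of a suitably spread-out infinite set $V$ no run ever visits a second symbol $1$, bound the cells reachable from a common start $\vec p$ with a single landmark $1$ at $\vec a$ by a thickened ``two-dimensional'' set, and use $d \geq 3$ to choose $V$ with pairwise differences avoiding that set. The reduction in your first paragraph, the greedy construction of $V$, and your final forbidden-difference set $T = \Z W + \Z W + W + W + W + W$ are all fine (the paper uses $\Z W + \Z W + W + W$).

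The gap is your reachability lemma, which is false as stated: the visited cells need not lie in $(\vec p + \Z W + W + W) \cup (\vec a + \Z W + W + W)$, a union of two one-dimensional families. Concretely, take $\vec p = \vec 0$ and $\vec a = n\vec e_1$: the heads can separate at time $0$, head $2$ drifting in direction $\vec e_1 + \vec e_2$ at half speed while head $1$ goes east, finds the $1$ at $\vec a$ at time $n$, turns to direction $\vec e_2$ and catches head $2$ at $n\vec e_1 + n\vec e_2$ at time $2n$; upon meeting they can jointly turn to direction $\vec e_3$ and visit $n\vec e_1 + n\vec e_2 + m\vec e_3$ for all $m$, and for $n$, $m$ large and independent these cells are boundedly close to no line through $\vec p$ or through $\vec a$. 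So the reachable set genuinely is two-dimensional, namely $\vec a + \Z W + \Z W + W + W$, and your derivation of $\vec a_2 - \vec a_1 \in T$ from the stated lemma does not stand. You do correctly identify the delicate point -- ruling out that repeated meetings and returns to $\vec a$ push the heads into a three-dimensional region -- but you leave it unproved, and that is where the real content of the paper's proof lies. The paper's mechanism is: until the first $1$ is found both heads are confined to $\vec p + \Z W + W$, so $\vec p \in \vec a + \Z W + W$; thereafter the head that found the $1$ is confined to the one-dimensional family $\vec a + \Z W + W$ until it meets the other head, so every meeting occurs \emph{inside} $\vec a + \Z W + W$; and after a meeting the pair is confined to the meeting point plus $\Z W + W$, hence to $\vec a + \Z W + \Z W + W + W$, until one of them returns to $\vec a$, which re-anchors the argument. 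That confinement argument is what you would need to supply in place of your lemma; once it is in place, the rest of your proof goes through since your $T$ contains $\Z W + \Z W + W + W$.
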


\begin{proof}
Let $A$ be a $2$PWDFA taking only steps of length $0$ or $1$ such that $S(A)$ is sparse and contains at least two configurations. We may again assume that $X_1^d \subset S(A)$. As in the proof of Theorem~\ref{thm:1SparseRecursive}, it is easy to see that there exists some $p \in \N$ such that, denoting $W = \{ \vec v \in \Z^d \;|\; \|\vec v\| \leq p \}$ and $\Z W = \{ n \vec w \;|\; n \in \Z, \vec w \in W \}$, we have the following. Let the two heads of $A$ be initialized on some coordinates $\vec v = \vec v_0 \in \Z^2$ and $\vec w = \vec w_0 \in \Z^2$ in any states, and denote by $(\vec v_n)_{n \leq N}$ and $(\vec w_n)_{n \leq N}$ their itineraries up to some timestep $N \in \N$. If we have $\| \vec v - \vec w \| \leq p$ ($\| \vec v - \vec w \| > p$), then $\vec v_n \in \vec v + \Z W + W$ and $\vec v_n \in \vec v + \Z W + W$ until either head sees a symbol $1$ (either head sees a symbol $1$ or the heads meet each other, respectively). In the former case, note that the heads may travel together, so that their `combined state' can have a period greater than $|Q|$.

Analogously to the proof of Proposition~\ref{prop:OneHeadNoSparse}, let $V \subset \Z^d$ be an infinite set such that $\vec v - \vec w \notin \Z W + \Z W + W + W$ for all $\vec v, \vec w \in V$. Define $x \in \Sigma^{\Z^d}$ by $x_{\vec v} = 1$ if and only if $\vec v \in V$. We prove that $x$ is accepted by $A$, contradicting the sparsity of $S(A)$. We may assume that $A$ is started at some position $\vec w \in \Z^d$ and encounters a $1$ at the origin after some number of steps.

By the first paragraph, both heads stay in the region $\vec w + \Z W + W$ until the origin is found, say by the first head. Then $\vec w \in \Z W + W$, so the second head stays in the domain $\Z W + \Z W + W + W$ until it encounters the origin or the first head. The first head is restricted to the domain $\Z W + W$ until it meets the second head, so the heads cannot reach any coordinate $\vec v \in V \setminus \{\vec 0\}$ before this. But if the heads meet, they must do so in a coordinate of $\Z W + W$, and after this, they are confined to the domain $\Z W + \Z W + W + W$ until one of them reaches the origin again. Thus, the heads never reach a symbol $1$ other than the origin, and since $X_1^d \subset S(A)$, the configuration $x$ must be accepted.
\qed
\end{proof}

There are no nontrivial restrictions for sparse sofic shifts.

\begin{theorem}
\label{thm:SparseIsSofic}
For all dimensions $d \geq 2$, every sparse co-RE subshift is in $\sofic^d$.
\end{theorem}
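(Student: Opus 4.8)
The plan is to exhibit, for a given $n$-sparse $\PI^0_1$ subshift $X \subseteq \{0,1\}^{\Z^d}$ (which we may assume nonempty, as the empty subshift is an SFT), a tiling system $Y$ over an extended alphabet $\Gamma = \{0,1\} \times \Lambda$ whose projection to the first coordinate is exactly $X$, using the second dimension to carry out a computation. First I would reformulate membership in terms of relative positions. Since $X \subseteq X_n^d$, every $x \in X$ has at most $n$ cells carrying $1$; write $S(x) \subseteq \Z^d$ for this finite set. Calling a finite set $S$ with $|S| \le n$ \emph{good} when the configuration that is $1$ exactly on $S$ lies in $X$, we have $X = \{ x : |S(x)| \le n \text{ and } S(x) \text{ is good} \}$, and goodness is translation-invariant by shift-invariance of $X$. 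Since $X$ is nonempty and shift-invariant, the all-$0$ configuration lies in $X$ (it is a limit of shifts of any member), so no forbidden pattern is entirely $0$; consequently a forbidden pattern $P$ occurs in such an $x$ only when its $1$-cells are anchored onto cells of $S(x)$, leaving only finitely many placements to test. Enumerating the recursively enumerable forbidden set and testing these placements then \emph{semi-decides badness}: there is an algorithm that halts on input $S$ precisely when $S$ is bad. The second dimension is essential here, since in one dimension the analogue fails: the distance-$2^n$ language of Proposition~\ref{prop:SparseCoRE} is $2$-sparse and $\PI^0_1$ but not regular, hence not sofic.

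Next I would build $Y$ so that $\Lambda$ supports three superimposed sublayers. A \emph{skeleton} sublayer, in the spirit of the soficity of the sunny-side-up shift, forces the first coordinate to carry at most $n$ ones and designates a single distinguished computation region, using local rules and the freedom afforded by $d \ge 2$. A \emph{signalling} sublayer emits, from each $1$, signals that travel to the computation region and deposit there the relative coordinate of that $1$ (its displacement encoded spatially along the signal). A \emph{computation} sublayer is a standard SFT space-time simulation of a Turing machine that, reading the collected coordinates as its input $S$, runs the badness semi-algorithm above; I would add local rules making a \emph{halt} transition inconsistent, so that no valid tiling can be completed once badness is detected. The projection $\pi : \Gamma \to \{0,1\}$ keeps only the first coordinate.

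It then remains to check $\pi(Y) = X$. For $x \in X$, the set $S(x)$ is good, so the semi-algorithm never halts; the simulation therefore admits a consistent infinite space-time diagram and $x$ lifts to a point of $Y$, giving $X \subseteq \pi(Y)$. Conversely, any $y \in Y$ projects to some $x$ whose one-set, were it bad, would drive the simulation to a halting transition and hence to a forbidden local configuration, contradicting $y \in Y$; so $S(x)$ is good and $x \in X$, giving $\pi(Y) \subseteq X$.

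The main obstacle I expect is the signalling step: the up-to-$n$ ones may be arbitrarily far apart, so their relative coordinates are unbounded, and a finitely-described tiling system must nonetheless route each coordinate faithfully to one computation region while keeping the entire infinite space-time diagram locally consistent. This forces care in (i) making the skeleton pin down a unique computation region and order the ones canonically, (ii) encoding unbounded displacements by signals whose geometry is checkable by finitely many local rules, much like the speed-ratio signals of Proposition~\ref{prop:SparseCoRE}, and (iii) arranging the non-halting computation as a genuinely infinite, everywhere-consistent tiling (for instance on an expanding quarter-plane tape), together with the degenerate cases of fewer than $n$ ones and of the all-$0$ configuration. The remaining bookkeeping is a routine adaptation of the standard SFT-simulates-Turing-machine machinery, with the sparse ones playing the role of the input.
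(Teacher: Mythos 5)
Your overall architecture --- signal layers that carry the positions of the finitely many $1$s to a space--time Turing-machine simulation whose halting step is locally forbidden --- is the same as the paper's, and your reduction of membership to semi-deciding badness of the finite set $S(x)$ (the all-$0$ configuration lies in $X$, so every forbidden pattern must anchor a $1$-cell on $S(x)$, leaving finitely many placements to test per enumerated pattern) is correct and is essentially what the paper's simulated machine computes. The genuine gap is in the step you yourself flag as the main obstacle, and specifically in the one design decision you commit to there: a skeleton layer that ``designates a single distinguished computation region.'' Local rules can force \emph{at most} one such region (the quadrant trick), but not \emph{at least} one whose position is not tied to the data: for a fixed bad configuration $x$, take locally consistent partial lifts on $[-m,m]^d$ in which the region lies outside the window (the signals are merely ``in transit'' there); by compactness these converge to a valid lift of $x$ with no computation region at all, so nothing is ever checked and $\pi(Y) \subset X$ fails. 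Any rule strong enough to forbid ``region too far away'' would also kill legitimate lifts.

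The paper's fix is to anchor the computation at the $1$s themselves, so that its existence is forced by the existence of a $1$ (and when there are no $1$s, nothing needs checking). Concretely, it first recodes the at most $k$ indistinguishable $1$s into $k$ pairwise distinct symbols, each occurring at most once (soficity is preserved under renaming back); each symbol $i$ gets its own signal layer, the four-region pattern of $0$s, $1$s and $2$s around a unique $\$$ with a diagonal boundary, and its own computation layer on which the local rules force a half-plane of blanks whose top row carries the product of all signal layers and a machine head exactly where symbol $i$ sits. The machine started at the symbol of minimal second coordinate sees, on that single initial row, the boundaries of every other symbol's quadrants, whose crossing positions and run lengths encode all the relative coordinates; from these it reconstructs arbitrarily large central patterns of $x$ and runs precisely your badness semi-algorithm, with the halting state made locally inconsistent. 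Replacing your free-floating computation region and unspecified routing by this per-symbol anchoring and the quadrant geometry turns your outline into the paper's proof; without that replacement the converse inclusion does not go through.
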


\begin{proof}
We show the result in two dimensions, the general case is similar.

Here, we consider a larger alphabet than $\Sigma = \{0, 1\}$. Namely, we will show that every $\PI^0_1$ subshift $X$ over $\{0, \ldots, k\}$ containing all symbols except $0$ at most once is sofic. This proves the original claim, since sofic shifts are closed under renaming the symbols. Let $T$ be a Turing machine enumerating a sequence $(P_i)_{i \in \N}$ of forbidden patterns for $X$. We will construct an SFT $Y \subset \{0, \ldots, k\}^{\Z^2} \times Z$, where $Z$ is also an SFT, such that the projection of $Y$ to the first layer is exactly $X$. The SFT $Z$ also has several layers, and its alphabet is $\{\$, 0, 1, 2\}^k \times ((Q \times \Gamma) \cup \Gamma \cup \{\#\})^k$, where $Q$ and $\Gamma$ are the state set and tape alphabet of another Turing machine $T'$ to be described later, respectively, with $\{\$, 0, 1, 2\}^k \subset \Gamma$. We denote by $Y_i$ ($Z_i$) the projection of $Z$ onto the $i$'th layer of the first (second, respectively) component of the product. The $Y_i$ are called \emph{signal layers} and the $Z_i$ \emph{computation layers}.

For each signal layer $Y_i \subset \{\$, 0, 1, 2\}^{\Z^2}$, the $\$$-symbols correspond exactly to the $i$-symbols in the first layer of $Y$, in the sense that for a configuration $y = (x, y^1, \ldots, y^k, z^1, \ldots, z^k) \in Y$ and $\vec v \in \Z^2$, we have $y^i_{\vec v} = \$$ if and only if $x_{\vec v} = i$. The forbidden patterns of each $Y_i$ are exactly the $2 \times 2$-patterns that do not occur in the pattern
\[ \begin{array}{ccccc}
	0 & 0 &  1 & 1 & 1 \\
	0 & 0 &  1 & 1 & 1 \\
	0 & 0 & \$ & 1 & 1 \\
	0 & 0 &  2 & 2 & 1 \\
	0 & 0 &  2 & 2 & 2 \\
\end{array} \]
It is easy to see that $Y_i$ contains at most one occurrence of $\$$, and thus the first layer of $Y$ contains at most one occurrence of $i$.

We now define the computation layers $Z_i$. First, every L-shaped pattern $\begin{smallmatrix} a & \\ b & c \end{smallmatrix}$ where $\#$ occurs is forbidden, except if it satisfies either $a = b = c = \#$ or $a \neq \# = b = c$. In the latter case, we require that the $\Gamma$-component of $a$ is exactly the corresponding symbol $a' \in \{\$, 0, 1, 2\}^k$ on the product layer $\prod_{i=1}^k Y_i$, that $a$ has a $Q$-component if and only if $a'_i = \$$, and that the $Q$-component is then the initial state of $T'$. In particular, in a configuration $y \in Y$ whose first layer contains the symbol $i$, the $Z_i$-layer $z_i$ contains a downward half plane of $\#$, on top of which is a horizontal row of the product layer $\prod_{i=1}^k Y_i$, and one read-write head of the Turing machine $T'$ in its initial state. Using further $2 \times 2$ forbidden patterns, we require that on the subsequent rows of $z_i$, a computation of $T'$ is simulated, and a halting state results in a tiling error.

Now, let $i \in \{1, \ldots, k\}$ be such that $i$ occurs in $x$ at a position $\vec v \in \Z^2$ whose y-coordinate is minimal. On the layer $Z_i$, for any given $n \in \N$, the simulated machine $T'$ can compute the square pattern $S_n = x|_{[-n,n]^2 + \vec v}$ of the first layer of $y$, since it can infer the relative positions of all symbols $j \in \{1, \ldots, k\}$ from its initial data. See Figure~\ref{fig:TM} for a visualization. Now, we define $T'$ so that for all $n \in \N$ in turn, it computes the aforementioned pattern $S_n$ and the first $n$ patterns $(P_j)_{j = 0}^{n-1}$ given by $T$, checks whether some $P_j$ occurs in $S_n$, and halts if this holds.

Now, a given $x \in \{0, \ldots, k\}^{\Z^d}$ is a projection of a configuration of $Y$ if and only if every symbol $i \in \{1, \ldots, k\}$ occurs in $x$ at most once, and for the one occurring at $\vec v \in \Z^2$ as above, no $P_j$ for $j < n$ occurs in $x|_{[-n,n] + \vec v}$ for any $n \in \N$. This is equivalent to $x \in X$. \qed
\end{proof}

\begin{figure}
\begin{center}
\begin{tikzpicture}[scale=.5]

\fill[black!20] (0,0) rectangle (14,3);

\foreach \x in {0,...,13}{
	\foreach \y in {0,...,2}{
		\node () at (\x+.5,\y+.5) {\#};
	}
}

\draw[thick, fill=black!20] (4.5,3.5) circle (.4);
\draw[very thick] (4,0) -- (4,3) -- ++(2,0) -- ++(0,-1) -- ++(1,0) -- ++(0,-1) -- ++(1,0) -- ++(0,-1);

\draw[thick, fill=black!20] (9.5,5.5) circle (.4);
\draw[very thick] (9,0) -- (9,5) -- ++(2,0) -- ++(0,-1) -- ++(1,0) -- ++(0,-1) -- ++(1,0) -- ++(0,-1) -- ++(1,0);

\foreach \x/\hy in {0/7,1/7,2/7,3/6}{
	\foreach \y in {3,...,\hy}{
		\node () at (\x+.5,\y+.5) {00};
	}
}
\foreach \x/\hy in {5/3,6/7,7/7,8/7}{
	\foreach \y in {3,...,\hy}{
		\node () at (\x+.5,\y+.5) {10};
	}
}
\foreach \x in {9,10,11}{
	\foreach \y in {3,...,7}{
		\node () at (\x+.5,\y+.5) {12};
	}
}
\foreach \x in {12,13}{
	\foreach \y in {3,...,7}{
		\node () at (\x+.5,\y+.5) {11};
	}
}

\node () at (4.5,3.5) {$q_0$};
\node () at (5.5,4.5) {$q_1$};
\node () at (5.5,5.5) {$q_2$};
\node () at (4.5,6.5) {$q_3$};
\node () at (3.5,7.5) {$q_4$};
\node () at (4.5,4.5) {$\gamma_0$};
\node () at (4.5,5.5) {$\gamma_0$};
\node () at (5.5,6.5) {$\gamma_1$};
\node () at (5.5,7.5) {$\gamma_1$};
\node () at (4.5,7.5) {$\gamma_2$};

\draw (0,0) grid (14,8);

\end{tikzpicture}
\end{center}
\caption{Simulating a Turing Machine on the computation layer $Z_1$, with $k = 2$. The two signal layers $Y_1$ and $Y_2$ are also shown, with the filled circles representing the $\$$-symbols, and the bordered areas containing $2$-symbols. Note that the grid squares show the tape of $T'$, not the contents of the signal layers, and that the latter can be inferred from the former. The $q_i$ are states of $T'$, and the $\gamma_i$ are its tape symbols.}
\label{fig:TM}
\end{figure}

Combining Theorem~\ref{thm:SparseIsSofic}, Theorem~\ref{thm:ThreeIsMore} and Proposition~\ref{prop:OneHeadNoSparse}, we obtain the following.

\begin{corollary}
\label{cor:Between}
For all dimensions $d \geq 2$, we have $S_1^d \subsetneq \sofic^d$, and for all dimensions $d \geq 3$, we have $S_2^d \not\subset \sofic^d$.
\end{corollary}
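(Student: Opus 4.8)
The plan is to treat the two statements separately. For the first, the inclusion $S_1^d\subseteq\sofic^d$ is exactly Theorem~\ref{thm:Between}, so only strictness needs an argument, and for this I would exhibit a sofic shift outside $S_1^d$. The $1$-sunny-side-up shift $X_1^d$ is $1$-sparse and its forbidden family $\{P : |P|_1>1\}$ is recursive, hence $X_1^d$ is a sparse $\PI^0_1$ subshift and Theorem~\ref{thm:SparseIsSofic} (applicable since $d\geq 2$) gives $X_1^d\in\sofic^d$. On the other hand $X_1^d$ is nontrivial, as it contains both the all-$0$ configuration and every single-$1$ configuration, so Proposition~\ref{prop:OneHeadNoSparse} yields $X_1^d\notin S_1^d$. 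Together these give $S_1^d\subsetneq\sofic^d$ for all $d\geq 2$.

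For the second statement I must produce a subshift lying in $S_2^d$ that is \emph{not} sofic. The first observation is that no sparse witness can work: Theorem~\ref{thm:SparseIsSofic} shows every sparse $\PI^0_1$ subshift is already sofic, and Theorem~\ref{thm:ThreeIsMore} shows that for $d\geq 3$ the only sparse member of $S_2^d$ is trivial anyway. In particular the co-RE-complete shift of Proposition~\ref{prop:SparseCoRE}, and any hyperplane lift of it obtained through the lifting remark, is sofic and hence useless here, since the lift of a sofic shift along independent hyperplanes is again sofic. The witness must therefore be genuinely non-sparse, and the natural candidate is the mirror subshift $X_{\mathrm{mirror}}^d$ of Definition~\ref{def:Mirror}, whose defining wall is a full $(d-1)$-dimensional hyperplane of $1$'s.

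I would first check that $X_{\mathrm{mirror}}^d$ is not sofic by the classical cut-counting argument: if it were a factor $\pi(Y)$ of an SFT $Y$ over an alphabet $\Delta$ with window width $w$, then in a configuration containing a wall, the content of a tall thin slab on one side forces the mirror slab on the other side, yet across the wall an SFT can transmit only $|\Delta|^{O(wn)}$ distinct boundary conditions along a height-$n$ segment, whereas the number of admissible half-slabs of width exceeding $w\log|\Delta|$ and height $n$ is strictly larger; two distinct half-patterns must then share a boundary condition, and a surgery at the wall produces a configuration violating the mirror rule, a contradiction. It then remains to place $X_{\mathrm{mirror}}^d$ in $S_2^d$ for $d\geq 3$: the local wall conditions of Definition~\ref{def:Mirror} are checked by a single head, and the mirror symmetry $x_{\vec v-k\vec e_1}=x_{\vec v+k\vec e_1}$ is checked by two heads placed at mirror-symmetric positions about the wall, which carry their read symbols inward and compare them at the shared wall cell, where they are colocated.

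The main obstacle is precisely this last membership claim. Since heads may share information only when they occupy the same cell, and two mirror-symmetric heads meet only on the wall, every comparison costs an excursion to the wall, after which the heads must return to the exact frontier distance $k$ under test; with a bounded number of heads and no counter this bookkeeping is delicate, and it is exactly the phenomenon that makes the separation of $S_2^d$ from $S_3^d$ open in low dimensions. I expect the construction to succeed for $d\geq 3$ by using one of the extra lattice directions to route the heads back to the current frontier, so that the dimensional slack distinguishes the range $d\geq 3$ from the open cases $d\leq 2$; making this routing precise, and confirming that two heads rather than three suffice, is the crux and the part I would expect to be genuinely hard. I note finally that the reverse non-containment $\sofic^d\not\subset S_2^d$ is immediate from the cited results alone, since $X_1^d$ is sofic by Theorem~\ref{thm:SparseIsSofic} but excluded from $S_2^d$ by Theorem~\ref{thm:ThreeIsMore}.
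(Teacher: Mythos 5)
Your proof of the first claim is exactly the paper's: inclusion is Theorem~\ref{thm:Between}, and strictness is witnessed by the nontrivial sparse co-RE shift $X_1^d$, which is sofic by Theorem~\ref{thm:SparseIsSofic} but excluded from $S_1^d$ by Proposition~\ref{prop:OneHeadNoSparse}. Nothing to add there.

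On the second claim you have, in effect, detected a slip in the printed statement. The corollary is introduced as a combination of Theorems~\ref{thm:SparseIsSofic} and~\ref{thm:ThreeIsMore} and Proposition~\ref{prop:OneHeadNoSparse}, and what those three results actually yield is precisely the direction you dismiss in your closing sentence as immediate: $\sofic^d \not\subset S_2^d$ for $d \geq 3$, witnessed again by $X_1^d$. That one-line observation \emph{is} the paper's intended proof of the second claim; the printed direction $S_2^d \not\subset \sofic^d$ cannot follow from the three cited results (your own analysis shows they only constrain what $S_2^d$ \emph{lacks}), and in the paper it is instead the content of the immediately following Proposition~\ref{prop:Mirror}, which proves $X_{\mathrm{mirror}}^d \in S_2^d \setminus \sofic^d$ for all $d \geq 2$, not merely $d \geq 3$. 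So the bulk of your effort goes into independently re-proving Proposition~\ref{prop:Mirror}; your reduction (no sparse witness can work, hence take the mirror shift) and your non-soficness counting argument are sound and match the paper's choices.

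The genuine gap is in your membership sketch for $X_{\mathrm{mirror}}^d \in S_2^d$, and your diagnosis of where the difficulty lies is wrong. Your scheme of mirror-symmetric heads making excursions to the wall and returning to an unbounded frontier distance $k$ indeed cannot be implemented with finite state, but the paper's construction avoids any such bookkeeping. Since the subshift of the automaton quantifies over all starting cells, each run needs to verify only the \emph{single} mirror pair through its own starting position: one head memorizes the bit at the start and walks in direction $\vec e_1$ at speed $1$, the other walks along $\vec e_1 + \vec e_2$ and, upon hitting the hyperplane of $1$'s, reflects to $\vec e_1 - \vec e_2$; synchronous updating forces the two heads to meet exactly at the mirror image of the starting cell, where the comparison is made. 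No distance is ever counted, and in particular the construction works already for $d = 2$, so your expectation that dimensional slack separates $d \geq 3$ from $d \leq 2$ here is mistaken --- the genuinely hard low-dimensional problem in this paper is separating $S_2^d$ from $S_3^d$ (Conjecture~\ref{con:Sharper}), not placing the mirror shift in $S_2^d$.
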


While Theorem~\ref{thm:SparseIsSofic} shows that all sparse $S_2^d$ subshifts are sofic, we can show that this is not true in general. In particular, the next result shows that $S_1^d$ is properly contained in $S_2^d$ for all $d \geq 2$.

\begin{proposition}
\label{prop:Mirror}
In all dimensions $d \geq 2$, we have $X_{\mathrm{mirror}}^d \in S_2^d \setminus \sofic^d$.
\end{proposition}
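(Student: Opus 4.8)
The plan is to prove the two memberships separately: $X_{\mathrm{mirror}}^d \in S_2^d$ by an explicit two-head construction, and $X_{\mathrm{mirror}}^d \notin \sofic^d$ by a counting-and-gluing argument. For the upper bound I split the forbidden patterns of Definition~\ref{def:Mirror} into their three families and build a separate machine for each, using that $S_2^d$ is closed under intersection and that $\SFT^d \subseteq S_1^d \subseteq S_2^d$. The first family is local, hence an SFT condition, captured by a single-head automaton; it forces every hyperplane perpendicular to $\vec e_1$ containing two $\vec e_i$-adjacent $1$s ($i\geq 2$) to be entirely filled with $1$s (a \emph{wall}). Assuming this local rule already holds, a head recognizes that it sits on a wall iff its cell and its $\vec e_2$-neighbor both carry $1$, and no non-wall hyperplane can pass this test. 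The second family (no two walls at $\vec e_1$-distance $>1$) is then handled by one head marching along $\vec e_1$, counting to $2$, and signalling a rejection back to a partner head left on the first wall, so that the two heads meet in a rejecting state precisely when a second wall exists.

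The heart of the construction is the third family, the mirror symmetry $x_{(c-k,\vec w)} = x_{(c+k,\vec w)}$ about a wall $\{x_1 = c\}$. The two cells to be compared lie $2k$ apart, so the heads cannot read them and compare within bounded memory; a naive lockstep sweep keeps them too far apart to ever communicate. Instead I use synchronized signals that bounce off the all-$1$ wall, exactly in the spirit of the multiplications and divisions in the proof of Proposition~\ref{prop:SparseCoRE}. Starting with both heads together at $(c-k,\vec w)$, I read that cell and then send the heads rightward at speeds $1$ and $2$; when the slow head reaches the wall, the fast head sits exactly at $(c+k,\vec w)$ and reads its symbol. Reuniting the heads at the wall lets them compare the two readings in finite memory and reject on a mismatch, and a further speed-matched signal advances the distance from $k$ to $k+1$, so that the loop runs forever exactly when the line is symmetric. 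Quantifying over all initial cells sweeps all walls, all perpendicular coordinates $\vec w$, and all $k$, giving $S(A)=X_{\mathrm{mirror}}^d$. The same routine works verbatim in every dimension, since all motion is along $\vec e_1$ with only a local $\vec e_2$-probe.

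For non-soficity, suppose $X_{\mathrm{mirror}}^d = \pi(Y)$ with $Y$ an SFT whose forbidden patterns fit in an $m$-box, over an alphabet $B$. For each assignment $u$ of bits to the $2$-separated sublattice points (coordinates $2,\ldots,d$ even) inside the box $R_n = [2,n]\times[1,n]^{d-1}$ I build a configuration $x^u \in X_{\mathrm{mirror}}^d$: the hyperplane $\{x_1=0\}$ is all $1$, the box carries $u$, its reflection across $\{x_1=0\}$ carries the mirrored copy, and all other cells are $0$. Separating the $1$s on the sublattice keeps the local rule satisfied so that no spurious wall appears, column $0$ is the unique wall, and symmetry holds; thus $x^u \in X_{\mathrm{mirror}}^d$, and there are $2^{\Theta(n^d)}$ choices of $u$. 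Choosing a $Y$-preimage $y^u$ of each and restricting it to the finite $m$-thick wall of cells surrounding $R_n$ (of size $O(mn^{d-1})$), the pigeonhole principle produces $u \neq u'$ whose preimages agree there. Splicing $y^{u'}$ inside $R_n$ into $y^u$ everywhere else yields $z \in Y$, because every $m$-window touching $R_n$ stays inside $R_n$ together with the wall where the two preimages coincide; but $\pi(z)$ has the wall $\{x_1=0\}$ and is asymmetric across it at some $\vec e_1$-distance $\geq 2$, contradicting Definition~\ref{def:Mirror}.

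The two steps where I expect to do real work are, in the construction, engineering the bouncing signals so that reading, comparing, and incrementing $k$ all bring the heads back together at the wall \emph{without} erasing the current distance; and, in the lower bound, arranging the information-carrying box to be $\vec e_1$-thin, so that it is enclosed by a wall whose size $O(mn^{d-1})$ is negligible against the $2^{\Theta(n^d)}$ contents it must control. It is precisely this imbalance that makes the pigeonhole argument bite and is the essential content of the non-soficity.
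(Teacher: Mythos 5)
Your non-soficity argument is correct and is essentially the standard pigeonhole/splicing proof that the paper only cites as ``completely standard'': the $\vec e_1$-thin box on a $2$-separated sublattice (to avoid creating spurious walls) carries $2^{\Theta(n^d)}$ bits of information through a boundary shell of size $O(mn^{d-1})$, and the exchange argument goes through. Note only that the third family of Definition~\ref{def:Mirror} constrains mirror pairs at distance $k>1$, so your box should indeed start at $x_1=2$ as you wrote.

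The membership half, however, has a genuine gap at its central step, and it is exactly the step you flagged as ``real work.'' You send the two heads rightward from $(c-k,\vec w)$ at speeds $1$ and $2$ and assert that ``when the slow head reaches the wall, the fast head sits exactly at $(c+k,\vec w)$ and reads its symbol.'' The positions are right, but in this model the heads are independent and exchange information only when co-located: at that instant they are $k$ cells apart, so the slow head cannot trigger the read, and the fast head has no local way to know it is at the mirror cell (everything to the right of the wall that it has seen is indistinguishable from any other $0/1$ background). Likewise ``reuniting the heads at the wall'' destroys the only record of $k$, which is stored as a head-to-wall distance. This is not a presentational issue but the actual difficulty of the proposition: with a global finite control the comparison would be trivial, and the whole point of the independent-heads model is that it is not. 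The paper's proof sidesteps it with a two-dimensional trick: both heads advance in the $\vec e_1$-direction at the same speed, one along the row and one diagonally in direction $\vec e_1+\vec e_2$; the diagonal head reflects to $\vec e_1-\vec e_2$ at the wall, so its $\vec e_2$-offset acts as an odometer and it returns to the starting row exactly at $(c+k,\vec w)$, where the row head --- carrying the memorized bit $x_{(c-k,\vec w)}$ --- is waiting; moreover, quantifying over all initial positions checks each mirror pair from its own starting cell, so no loop over $k$ is needed. Your axis-aligned bouncing idea can in fact be repaired (for instance, a head moving right at speed $1$ that slows to speed $1/3$ after crossing the wall meets a speed-$1/2$ companion precisely at $c+k$, a genuine co-location event at the mirror cell), but some such synchronization device must be supplied; as written, the comparison never happens.
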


\begin{proof}
The proof of $X_{\mathrm{mirror}}^d \notin \sofic^d$ is completely standard both in the theory of subshifts and in the theory of picture languages, although we do not have a direct reference for it. The same argument is applied in \cite[Example 2.4]{KaMa13} to a slightly different subshift.

To show that $X_{\mathrm{mirror}}^d \in S_2^d$, we describe a $2$PWDFA for it. Using the fact that $S_2^d$ is closed under intersection, we restrict to the SFT defined by the first point of Definition~\ref{def:Mirror}. We can also assume there is at most one hyperplane of symbols $1$, as this is checked by a $1$PWDFA that walks in the direction of the first axis from its initial position, and halts if it sees the pattern $\{ \vec 0 \mapsto 1, \vec e_2 \mapsto 1 \}$ twice.

Under these assumptions, the mirror property is easy to check. One of the heads memorizes the bit in the initial position in its finite memory. Then, one of the heads starts traveling to the direction $\vec e_1$, and the other to $\vec e_1 + \vec e_2$. If the latter sees a hyperplane of symbols $1$, it turns to the direction $\vec e_1 - \vec e_2$. If the heads meet, they check that the bit in the initial position matches the bit under the current position, and if not, the configuration is rejected. \qed
\end{proof}

Finally, we collapse the hierarchy. This can be thought of as an analogue of the well-known result that three counters are enough for \emph{all} computation.

\begin{theorem}
\label{thm:ThreeIsAll}
In all dimensions $d$, the classes $S_k^d$ for $k \geq 3$ coincide with the class of co-RE subshifts.
\end{theorem}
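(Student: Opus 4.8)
The plan is to prove the two inclusions separately and combine them with the monotonicity $S_k^d \subseteq S_{k+1}^d$ noted after the hierarchy definition. The forward inclusion is immediate from a fact already recorded, namely that every $S_k^d$ consists only of $\PI^0_1$ subshifts: a rejecting run halts after finitely many steps and therefore reads only a finite pattern, so the set of patterns that admit a rejecting run is recursively enumerable. Hence it suffices to prove that every co-RE subshift $X$ lies in $S_3^d$; together with $S_3^d \subseteq S_k^d \subseteq \{\PI^0_1 \text{ subshifts}\}$ this forces equality for all $k \geq 3$.

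For the reverse inclusion, fix a co-RE subshift $X$ with a recursively enumerable set $F$ of forbidden patterns, and let $M$ be a Turing machine that, on input a finite pattern $P$, halts exactly when some element of $F$ occurs in $P$ (this is semi-decidable by enumerating $F$ and testing occurrence, which is decidable for finite patterns). I would build a $3$PWDFA $A$ that, started with all heads at a coordinate $\vec v$, searches the configuration around $\vec v$ for a forbidden pattern and rejects if it finds one; since $S(A)$ quantifies over all initial coordinates, this yields $S(A) = X$. The three heads are used as follows: one head, the \emph{anchor}, remains fixed at the starting cell $\vec v$ for the entire computation and plays the role that the pointer symbol played in the proof of Proposition~\ref{prop:SparseCoRE}, while the two remaining heads (a counter head and a zig-zag head) simulate an arithmetical program whose single register is encoded as the distance of the counter head from the anchor, with multiplications and divisions realized by the same signal constructions as in Proposition~\ref{prop:SparseCoRE}, now bouncing off the anchor head rather than off a symbol $1$.

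Since arithmetical programs are Turing complete, the simulated program can run the following loop for $n = 1, 2, 3, \dots$: read the window $P_n = x|_{[-n,n]^d + \vec v}$ into the register, test whether any of the first $n$ patterns enumerated by $F$ occurs in $P_n$, and halt if one does. Reading the configuration is the one primitive absent from the pure counter model, and it is here that the zig-zag head is reused as a \emph{reading head}: to read the symbol at an offset stored in the register, the program decodes the offset, walks the reading head that many steps from the anchor (along each axis in turn, counting down a scratch value kept in the register via a G\"odel encoding such as $2^a 3^b \cdots$, whose components the mod- and division-commands of an arithmetical program can extract and recombine), reads the symbol, and returns the head to the anchor line by reversing the recorded motion. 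When the program halts, a final subroutine gathers all three heads at the anchor and enters a rejecting global state, so that $A$ rejects with all heads at a single coordinate, as the model requires. If a forbidden pattern occurs at $\vec v$, it is contained in $P_n$ for all large $n$ and will be detected, so $A$ rejects; if no forbidden pattern occurs anywhere, no run ever halts and $A$ accepts. Hence $S(A) = X$.

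The main obstacle is the interleaving of configuration reading with the single-register arithmetic: the register is permanently committed to the state of the simulated program, so every excursion of the reading head must decode a target offset, move and return deterministically under synchronous updating, and deposit the read bit, all without corrupting the register --- which is precisely what forces the G\"odel-encoding bookkeeping and the careful navigation and return of the reading head. The arithmetic itself, and the reduction of arbitrary step lengths and bounded-range communication to the basic model, are routine given Proposition~\ref{prop:SparseCoRE} and the robustness properties noted earlier; the delicate point is purely the input-access mechanism just described.
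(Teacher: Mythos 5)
Your overall architecture matches the paper's: the forward inclusion via the observation that every $S_k^d$ subshift is $\PI^0_1$, the reduction to showing $S_3^d$ contains every co-RE subshift, the simulation of an arithmetical program (Lemma~\ref{lem:ExpRE}) by a counter head and a zig-zag head with the register realized as a head-to-head distance, and the final gathering of all heads for rejection. The gap is in the one place you yourself flag as delicate: the input-access mechanism. You keep the anchor head \emph{fixed} for the entire computation and send the zig-zag head on unbounded excursions to read symbols at offsets decoded from the register. This does not work as described. The register is physically the distance between the counter head and the anchor, and every arithmetic operation (decrement, division, mod-test) is realized by heads meeting on the anchor--counter axis. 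A lone reading head that has left that axis cannot ``count down a scratch value kept in the register'': it has no way to learn when the scratch value reaches zero (it is co-located with nobody, and heads only communicate when they coincide), no way to be told to stop without a carefully timed chasing signal, and no way to retrace an unbounded path with finite memory. Worse, for a genuinely $d$-dimensional offset $(w_1,\ldots,w_d)$ you need $d$ independent distances realized simultaneously, but you have exactly one unbounded storage location --- the register --- which moreover must also continue to hold the G\"odel number of the simulated Turing machine configuration during the excursion. Extracting $w_i$ from a G\"odel encoding physically relocates the counter head, so the counter head cannot simultaneously encode the surviving program state and serve as a spatial reference for the excursion.

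The paper's proof resolves exactly this by \emph{not} fixing the anchor: the pointer head (its analogue of your anchor) is the one component that, unlike the symbol $1$ of Proposition~\ref{prop:SparseCoRE}, can be moved. The whole three-head assembly is translated through the configuration one unit step at a time, each unit step being a constant-cost operation (one sweep of the zig-zag head) commanded by the simulated program, and the program tracks the assembly's total displacement $\vec u$ as an ordinary simulated variable alongside the pattern $P$ read so far (Algorithm~\ref{alg:Seeker}). Reading then happens at the pointer's current cell, so no head ever counts on its own and the register is never consumed by navigation. If you want to salvage your version, you must either adopt this moving-reference design or replace your excursion with explicit two-head signal constructions (e.g.\ rotating the counter from the $\vec e_1$-axis to another axis by a diagonal-versus-straight meeting) --- and even then you will hit the single-register storage conflict for offsets with more than one nonzero coordinate. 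As written, the proposal is missing the theorem's key idea.
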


\begin{proof}
We only need to show that $S_3^d$ contains all $\PI^0_1$ subshifts. Namely, $S_k^d \subset S_{k+1}^d$ holds for all $k > 0$, and since a Turing machine can easily enumerate patterns supporting a rejecting computation of a multihead finite automaton, every $S_k^d$ subshift is also $\PI^0_1$.

Let $T$ be a Turing machine that, when started from the initial configuration $c_0$ with empty input, outputs a sequence $(P_i)_{i \in \N}$ of patterns by writing each of them in turn to a special output track, and visiting a special state $q_\mathrm{out}$. We construct a $3$PWDFA $A_T$ accepting exactly those configurations where no $P_i$ occurs. The heads of $A_T$ are called the \emph{pointer head}, the \emph{zig-zag head}, and the \emph{counter head}. The machine has a single initial state, and when started from any position $\vec v \in \Z^d$ of a configuration $x$, it checks that no $P_i$ occurs in $x$ at $\vec v$. Since $A_T$ is started from every position, it will then forbid all translates of the $P_i$.

The machine simulates an arithmetical program as in the proof of Proposition~\ref{prop:SparseCoRE}, but in place of the `leftmost symbol $1$', we use the pointer head. The crucial difference here is that unlike a symbol $1$, the pointer head can be moved freely. This allows us to walk around the configuration, and extract any information we want from it. The arithmetical program simulates Algorithm~\ref{alg:Seeker}, which finally simulates the Turing machine $T$.

\begin{algorithm}
\caption{The algorithm that the three-head automaton $A_T$ simulates.}
\label{alg:Seeker}
\begin{algorithmic}[1]

\State $c \gets c_0$ \Comment{A configuration of $T$, set to the initial configuration}
\State $\vec u \gets \vec 0 \in \Z^d$ \Comment{The position of the pointer head relative to the initial position}
\State $P : \emptyset \to \{0,1\}$ \Comment{A finite pattern at the initial position}

\Loop
	\Repeat
		\State $c \gets \textsc{NextConf}_T(c)$ \Comment{Simulate one step of $T$}
	\Until{$\textsc{State}(c) = q_{\mathrm{out}}$} \Comment{$T$ outputs something}
	\State $P' \gets \textsc{OutputOf}(c)$ \Comment{A forbidden pattern}
	\While{$D(P') \not\subset D(P)$}
		\State $\vec w \gets \textsc{LexMin}(D(P) \setminus D(P'))$ \Comment{The lexicographically minimal vector}
		\While{$\vec u \neq \vec w$}
			\State $\vec d \gets \textsc{NearestUnitVector}(\vec w - \vec u)$ \Comment{Nearest unit vector in $\Z^d$}
			\State $\textsc{MoveBy}(\vec d)$ \Comment{Move the heads of $A_T$ to the given direction}
			\State $\vec u \gets \vec u + \vec d$
		\EndWhile
		\State $b \gets \textsc{ReadSymbol}$ \Comment{Read the symbol of $x$ under the pointer head}
		\State $P \gets P \cup \{ \vec u \mapsto b \}$ \Comment{Expand $P$ by one coordinate}
	\EndWhile
	\If{$P|_{D(P')} = P'$} \textbf{halt} \Comment{The forbidden pattern $P'$ was found}
	\EndIf
\EndLoop

\end{algorithmic}
\end{algorithm}

The algorithm remembers a finite pattern $P = x|_{D(P) + \vec v}$, where $\vec v \in \Z^d$ is the initial position of the heads, and a vector $\vec u \in \Z^d$ containing $\vec w - \vec v$, where $\vec w \in \Z^d$ is the current position of the pointer. The machine $T$ is simulated step by step, and whenever it outputs a forbidden pattern $P'$, the algorithm checks whether $D(P)$ contains its domain. If so, it then checks whether $x|_{D(P') + \vec v} = P'$. If this holds, then the algorithm halts, the arithmetical program simulating it halts, and the automaton $A_T$ moves all of its heads to the pointer and rejects. If $P'$ does not occur, the simulation of $T$ continues.

If $D(P')$ is not contained in $D(P)$, then the algorithm expands $P$, which is done in the outer \textbf{while}-loop of Algorithm~\ref{alg:Seeker}. To find out the contents of $x$ at some coordinate $\vec w + \vec v$ for $\vec w \in D(P')$, the algorithm chooses a unit direction (one of $\pm \vec e_i$ for $i \in \{1, \ldots, d\}$) that would take the pointer head closer to $\vec w + \vec v$, and signals it to $A_T$ via the arithmetical program. In a single sweep of the zig-zag head to the pointer and back, $A_T$ can easily move all of its heads one step in any unit direction. Then the simulation continues, and the algorithm updates $\vec u$ accordingly. When $\vec u = \vec w$ finally holds, the algorithm orders $A_T$ to read the symbol $x_{\vec v + \vec u}$ under the pointer, which is again doable in a single sweep. The bit $b = x_{\vec v + \vec u}$ is given to the algorithm, which expands $P$ by defining $P_{\vec u} = x_{\vec v + \vec u}$.

For a configuration $x$ and initial coordinate $\vec v \in \Z^d$, the automaton $A_T$ thus computes the sequence of patterns $(P_i)_{i \in \N}$ and checks for each $i \in \N$ whether $x|_{D(P_i) + \vec v} = P_i$ holds, rejecting if it does. Since $\vec v$ is arbitrary, we have $x \in S(A_T)$ if and only if no $P_i$ occurs in $x$. Thus $S_3^d$ contains an arbitrary $\PI^0_1$ subshift. \qed
\end{proof}

The basic comparisons obtained above are summarized in Figure~\ref{fig:Comparison}.

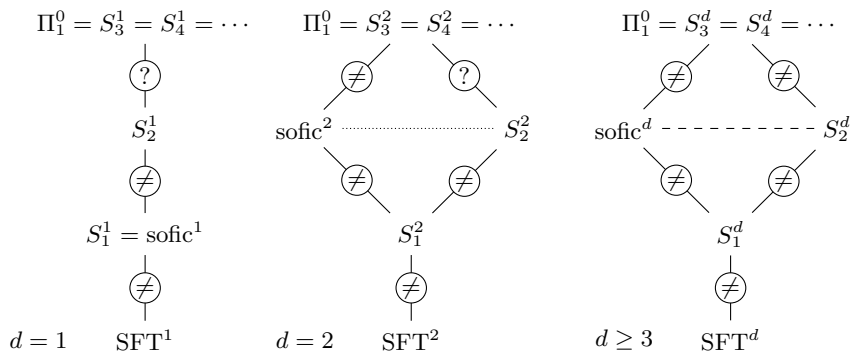
\begin{figure}
\begin{center}
\begin{tikzpicture}[scale = 1.4]

\begin{scope}
	\node () at (-1,0) {$d = 1$};
	\node (SFT) at (0,0) {$\SFT^1$};
	\node (S1) at (0, 1) {$S_1^1 = \sofic^1$};
	\node (S2) at (0, 2) {$S_2^1$};
	\node (S3) at (0, 3) {$\PI^0_1 = S_3^1 = S_4^1 = \cdots$};

	\draw 
		(SFT) edge node[draw,circle,fill=white,inner sep=0pt,minimum size=.4cm] {$\neq$} (S1)
		(S1)  edge node[draw,circle,fill=white,inner sep=0pt,minimum size=.4cm] {$\neq$} (S2)
		(S2) edge node[draw,circle,fill=white,inner sep=0pt,minimum size=.4cm] {?} (S3);
\end{scope}

\begin{scope}[xshift=2.5cm]
	\node () at (-1,0) {$d = 2$};
	\node (SFT) at (0,0) {$\SFT^2$};
	\node (S1) at (0, 1) {$S_1^2$};
	\node (sofic) at (-1, 2) {$\sofic^2$};
	\node (S2) at (1, 2) {$S_2^2$};
	\node (S3) at (0, 3) {$\PI^0_1 = S_3^2 = S_4^2 = \cdots$};

	\draw 
		(SFT) edge node[draw,circle,fill=white,inner sep=0pt,minimum size=.4cm] {$\neq$} (S1)
		(S1)  edge node[draw,circle,fill=white,inner sep=0pt,minimum size=.4cm] {$\neq$} (S2)
		edge node[draw,circle,fill=white,inner sep=0pt,minimum size=.4cm] {$\neq$} (sofic)
		(sofic) edge node[draw,circle,fill=white,inner sep=0pt,minimum size=.4cm] {$\neq$} (S3)
		(S2) edge node[draw,circle,fill=white,inner sep=0pt,minimum size=.4cm] {?} (S3);
	\draw[densely dotted] (sofic) edge (S2);
\end{scope}

\begin{scope}[xshift=5.5cm]
	\node () at (-1,0) {$d \geq 3$};
	\node (SFT) at (0,0) {$\SFT^d$};
	\node (S1) at (0, 1) {$S_1^d$};
	\node (sofic) at (-1, 2) {$\sofic^d$};
	\node (S2) at (1, 2) {$S_2^d$};
	\node (S3) at (0, 3) {$\PI^0_1 = S_3^d = S_4^d = \cdots$};

	\draw 
		(SFT) edge node[draw,circle,fill=white,inner sep=0pt,minimum size=.4cm] {$\neq$} (S1)
		(S1)  edge node[draw,circle,fill=white,inner sep=0pt,minimum size=.4cm] {$\neq$} (S2)
		edge node[draw,circle,fill=white,inner sep=0pt,minimum size=.4cm] {$\neq$} (sofic)
		(sofic) edge node[draw,circle,fill=white,inner sep=0pt,minimum size=.4cm] {$\neq$} (S3)
		(S2) edge node[draw,circle,fill=white,inner sep=0pt,minimum size=.4cm] {$\neq$} (S3);
	\draw[dashed] (sofic) edge (S2);
\end{scope}

\end{tikzpicture}
\end{center}
\caption{A comparison of our classes of subshifts. The solid, dashed and dotted lines denote inclusion, incomparability and an unknown relation, respectively, as we only know $S_2^d \not\subset \sofic^d$ for $d = 2$.}
\label{fig:Comparison}
\end{figure}

\section{The Classes $S_2^1$ and $S_2^2$}

A major missing link in our classification is the separation of $S_2^d$ and $S_3^d$ in dimensions $d \leq 2$. We leave this problem unsolved, but state the following conjecture.

\begin{conjecture}
\label{con:Sharper}
For $d \leq 2$, there exists a sparse co-RE subshift which is not in $S_2^d$. In particular we have $S_2^d \subsetneq S_3^d$, and $\sofic^2$ and $S_2^2$ are incomparable.
\end{conjecture}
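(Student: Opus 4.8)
The plan is to exhibit an explicit sparse co-RE subshift that I expect to lie outside $S_2^d$. The natural candidate involves \emph{three} symbols $1$, since Proposition~\ref{prop:SparseCoRE} already shows that two heads can carry out arbitrary computation on the single gap determined by two symbols. Let $S \subset \N^2$ be a fixed co-RE-hard set, for instance $S = \{(a,b) \;|\; M \text{ halts on input } \langle a, b \rangle\}$ for a universal machine $M$, and let $X_f$ be the subshift of $X_3^d$ in which a configuration is forbidden exactly when its three symbols $1$ are collinear along $\vec e_1$ with consecutive gaps $a, b \geq 1$ and $(a,b) \in S$. Dovetailing over $S$ enumerates the corresponding finite forbidden patterns, so $X_f$ is $\PI^0_1$, and being $3$-sparse it is sofic for $d \geq 2$ by Theorem~\ref{thm:SparseIsSofic}. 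The three stated consequences then follow from $X_f \notin S_2^d$: since $X_f \in \PI^0_1 = S_3^d$ by Theorem~\ref{thm:ThreeIsAll} and $S_2^d \subseteq S_3^d$ always holds, we get $S_2^d \subsetneq S_3^d$; and for $d = 2$, combining $X_f \in \sofic^2 \setminus S_2^2$ with $X_{\mathrm{mirror}}^2 \in S_2^2 \setminus \sofic^2$ from Proposition~\ref{prop:Mirror} yields the incomparability of $\sofic^2$ and $S_2^2$.

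The heart of the argument is the lower bound $X_f \notin S_2^d$, and the strategy I would pursue is to distill the power of two heads on a sparse configuration into a clean counter model and prove impossibility there. As in the proofs of Theorem~\ref{thm:1SparseRecursive} and Theorem~\ref{thm:ThreeIsMore}, away from the symbols $1$ the two heads move ultimately periodically and can only carry a bounded amount of positional information, so the symbols $1$ act as the only fixed reference points. Proposition~\ref{prop:SparseCoRE} shows that one fixed reference together with two repeatedly-meeting heads already simulates a single counter with multiplication and division, which is Turing-complete by Lemma~\ref{lem:ExpRE}. The point I would try to make precise is that this is essentially \emph{all} two heads can do: they can run an arbitrarily powerful computation on \emph{one} counter whose value is a single distance, but they cannot simultaneously hold a nontrivial counter value \emph{and} dispatch a head to traverse an unbounded portion of the configuration in order to acquire a second, independently located input. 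Membership in $X_f$ genuinely depends on the joint pair $(a,b)$, whereas a $2$PWDFA should be unable to fold the two separated gaps into a single working computation.

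I would formalize this as a reduction to a restricted two-counter problem: two counters, each manipulated and zero-tested only against its own reference, with information passing between them only through the finite control and only at instants when the two head-pointers coincide, under synchronous motion. The speed-trick operations of Proposition~\ref{prop:SparseCoRE} all proceed by splitting two coincident heads, bouncing off a fixed reference, and re-meeting; the conjectured obstruction is that the coincidence-only discipline forbids the step-by-step cross-counter \emph{transfer} on which Minsky-style universality relies once the two pointers are forced to sit at unboundedly separated values. Showing that such a model cannot recognize a co-RE-hard set $S \subset \N^2$ is exactly the related open problem about two-counter machines, and a negative answer would give the separation.

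The main obstacle, and the reason the statement is only a conjecture, is precisely this lower bound. Proving that two heads \emph{cannot} perform the task requires ruling out every strategy, including subtle ones in which a head is reparked as a temporary reference while its partner re-locates a symbol $1$ it has already passed; because the symbols $1$ persist in the configuration, a head can always rediscover them, so the naive ``no free reference'' argument does not immediately close. The standard lower-bound tools for multihead automata -- crossing sequences and communication complexity between tape regions -- are hard to deploy here because of the synchronized signals the heads can shoot, which is the very feature that made Proposition~\ref{prop:SparseCoRE} possible. I expect a successful proof must track the joint itinerary of both heads and bound the information about $(a,b)$ extractable before the heads first separate beyond the recurrence distance $|Q|$, but converting this intuition into a rigorous impossibility -- equivalently, resolving the two-counter problem above -- is the step I expect to be genuinely hard and where I would concentrate the effort.
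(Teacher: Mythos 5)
There is no proof to compare against here: the statement you are addressing is stated in the paper as a \emph{conjecture}, and the paper explicitly leaves it open. Your proposal does not close it either --- you say so yourself --- so what you have is a candidate subshift plus a correct derivation of the consequences \emph{conditional} on the lower bound $X_f \notin S_2^d$, which is the entire content of the conjecture. The conditional part is fine: $X_f$ is $3$-sparse and $\PI^0_1$, hence sofic by Theorem~\ref{thm:SparseIsSofic} and in $S_3^d$ by Theorem~\ref{thm:ThreeIsAll}, so $X_f \notin S_2^d$ would indeed give both $S_2^d \subsetneq S_3^d$ and, together with Proposition~\ref{prop:Mirror}, the incomparability of $\sofic^2$ and $S_2^2$. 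But the lower bound itself is exactly the open problem, and your reduction of it to a ``restricted two-counter model'' is essentially the same move the paper makes in its closing discussion (where the candidate is instead the two-$1$ subshift with prime gap, and the obstruction is phrased as an open question about two-counter machines that can remember their input). Naming the hard step is not the same as doing it.

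Beyond the admitted gap, be cautious about the specific intuition you lean on: that two heads ``cannot simultaneously hold a nontrivial counter value and dispatch a head to acquire a second input.'' With three symbols $1$ at positions $0$, $a$, $a+b$, the automaton gets three \emph{permanent, rediscoverable} reference points for free, which is strictly more infrastructure than in Proposition~\ref{prop:SparseCoRE}; the zig-zag head can bounce off any of them, and zero-tests against each reference are available at all times. It is not at all clear that the pair $(a,b)$ cannot be folded into a single counter value (say via an encoding computed by bouncing alternately off the references), in which case your $X_f$ would actually lie \emph{in} $S_2^d$ and the candidate would fail. This is precisely the loophole the paper flags --- the machine can use the positions of the $1$'s mid-computation to do things an ordinary arithmetical program cannot --- and your sketch does not rule it out. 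Your coincidence-only communication discipline is also not obviously faithful to the model: heads can communicate by timing (one head observing \emph{when} it meets a $1$ encodes information the other head can recover at the next meeting), so a lower bound proved for the clean counter model would still need a simulation argument to transfer back to $2$PWDFAs. In short: the statement remains a conjecture, and the proposal, while a reasonable research plan aligned with the paper's own discussion, is not a proof.
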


Recall from the proof of Proposition~\ref{prop:SparseCoRE} that two counters are enough for a plane-walking automaton to simulate any arithmetical program in a sparse subshift. It is known that two-counter machines (which are basically equivalent to arithmetical programs by \cite{Sc72}) cannot compute all recursive functions, and in particular cannot recognize the set of prime numbers \cite{IbTr93}. A natural candidate for realizing Conjecture~\ref{con:Sharper} in the one-dimensional case would thus be the subshift $X \subset X_2^1$ where the distance of the two $1$'s cannot be a prime number.

However, instead of simply simulating an arithmetical program, the automaton may use the position of the rightmost $1$ in the middle of the computation, and a priori compute something an ordinary arithmetical program cannot. In some sense it thus simulates an arithmetical program that remembers its input. Conversely, we also believe that a run of a $2$PWDFA on a $2$-sparse subshift can be simulated by such a machine. All currently known proof techniques for limitations of two-counter machines break down if one is allowed to remember the input value, which raises the following question.

\begin{question}
Can arithmetical programs (or two-counter machines) that remember their input (for example, in the sense that they can check whether the current counter value is greater than the input) recognize all recursively enumerable sets? In particular, can they recognize the set of prime numbers?
\end{question}

Other tools for separating classes of multihead automata are diagonalization, where an automaton with much more than $k$ heads can analyze the behavior of one with $k$ heads, and choose to act differently from it on some inputs, and computability arguments, where algorithms of certain complexity can only be computed by machines with enough heads. Unfortunately, these approaches cannot separate $S_2^d$ from $S_3^d$, since both are capable of universal computation.

\bibliographystyle{plain}
\bibliography{../../../bib/bib}{}

\end{document}